\theoremstyle{thmstyleone}%
\newtheorem{theorem}{Theorem}
\newtheorem{proposition}[theorem]{Proposition}%
\newtheorem{example}{Example}%
\newtheorem{lemma}{Lemma}
\newtheorem{step}{Step}
\newtheorem{corollary}{Corollary}
\def\T{{ \mathrm{\scriptscriptstyle T} }}
\begin{document}

\journaltitle{Accepted by Science China Mathematics}
\DOI{https://doi.org/10.1360/SCM-2024-0039}
\copyrightyear{}

\firstpage{1}


\title[Optimal QS Design]{Optimal design of experiments with quantitative-sequence factors}

\author[1]{Yaping Wang}
\author[2]{Sixu Liu}
\author[3,$\ast$]{Qian Xiao\ORCID{0000-0001-7869-7109}}

\authormark{Wang, Liu and Xiao}

\address[1]{\orgdiv{KLATASDS-MOE, School of Statistics}, \orgname{East China Normal University} 
} 

\address[2]{\orgname{Beijing Institute of Mathematical Sciences and Applications}
}

\address[3]{\orgdiv{Department of Statistics, School of Mathematical Sciences}, \orgname{Shanghai Jiao Tong University}}

\address[*]{\orgdiv{Corresponding author: qian.xiao@sjtu.edu.cn}, \orgname{800 Dongchuan RD, Minhang District,
Shanghai, China, 200240}}





\abstract{
A new type of experiment with joint considerations of quantitative and sequence factors is recently drawing much attention in medical science, bio-engineering, and many other disciplines. The input spaces of such experiments are semi-discrete and often very large. Thus, efficient and economical experimental designs are required. Based on the transformations and aggregations of good lattice point sets, we construct a new class of optimal quantitative-sequence (QS) designs that are marginally coupled, pair-balanced, space-filling, and asymptotically orthogonal. The proposed QS designs have a certain flexibility in run and factor sizes and are especially appealing for high-dimensional cases.
}
\keywords{Marginally coupled design, Maximin distance design, Order-of-addition experiment, Orthogonal array. \\
\textbf{MSC(2020):} 62K20, 62K99}

\maketitle

\section{Introduction}
\label{intro}

In many modern scientific areas, a new type of experiment considering both the quantities and sequences to organize components, denoted as quantitative-sequence factors, becomes increasingly important \citep{wang2020, bh2015, wang2015bio, jourdain2009mixed}. As an illustration, \cite{wang2020} reported pioneering work on combinatorial drug therapy for lymphoma treatment where three FDA-approved drugs, denoted $c_1$, $c_2$, and $c_3$, were considered. They found that both doses of the drugs and their orders of addition had significant impacts on the efficacy of endpoint treatment. Three runs of this experiment are shown in Table~\ref{tab1}, where the dose units for drugs $c_1$, $c_2$, and $c_3$ are $\mu$ M, nM, and $\mu$ M, respectively. In each run, the three drugs were added one by one every 6 hours and the response (cell inhibition) was measured 6 hours after the addition of the last drug. Clearly, if we want to identify the optimal drug combination, both the doses of the drugs and their orders of addition need to be optimized simultaneously.
For another example in the production of nanocellulose (NC) gels \citep{bh2015}, a pretreatment process involves swelling agents, different acids, and enzymes to release hemicellulose. The orders in which the factors in the pretreatment are added along with their quantities are to be optimized for the NC size. In simulation experiments considering single-machine scheduling \citep{allahverdi1999review}, several jobs are to be processed on a single machine, where the time required to complete each job and the orders to process these jobs will affect the output. Such quantitative-sequence factors are also used in other physical and computer experiments \citep{shinohara1998, wang2015bio, jourdain2009mixed, panwalkar1973sequencing}.
\vspace{-0.1in}
\begin{table}[htbp]
\begin{minipage}{\linewidth}
\caption{Three runs of the drug combination experiment in \cite{wang2020}.
\label{tab1}}%
\begin{center}
\begin{tabular}{cccccccc}\hline
\multicolumn{1}{c}{Run} & \multicolumn{2}{c}{Drug $c_1$} & \multicolumn{2}{c}{Drug $c_2$} & \multicolumn{2}{c}{Drug $c_3$} & \multicolumn{1}{c}{Response} \\
\hline
& dose & order & dose & order & dose & order & \\
\hline
 1 & 3.75 & 1 & 95  & 2 & 0.16  & 3  & 39.91\\
 2 & 2.80 & 1 & 70  & 2 & 0.16  & 3  & 30.00\\
 3 & 3.75 & 3 & 95  & 1 & 0.16  & 2  & 34.68\\ \hline
\end{tabular}
\end{center}
\end{minipage}
\vspace{-0.1in}
\end{table}

For experiments with both quantitative and sequence input, practitioners often enumerate all possible sequences and apply factorial designs on quantities for each sequence \citep{wang2020}. However, this strategy requires as many as $m!s^m$ runs to arrange the $m$ components each having $s$ levels in its quantity, which can be prohibitively expensive for large $m$. Therefore, efficient and economic designs are required for this new type of experiments.
In the current experimental design literature, most researchers focused on either quantitative factors \citep{joseph2016space,wu2011experiments,luna2022orthogonal} or sequence factors, ie, order-of-addition factors \citep{van1995design, peng2017, voelkel2019design, lin2019order, schoen2023order, stokes2023metaheuristic}.
\cite{xiao2021}, \cite{yang2023ordering} and \cite{Tsai2023} considered experiments with both quantitative and sequence factors. The work of \cite{xiao2021} focuses mainly on modeling and active learning, and our paper focuses on experimental design.
{\cite{yang2023ordering} focused on $D$-optimal fractional factorial designs under a linear model.} 
\cite{Tsai2023} proposed a new class of design called dual-orthogonal arrays, which combine order-of-addition orthogonal arrays and two-level orthogonal arrays together. However, these designs allow only two-level quantitative factors to be investigated, which are rarely used in computer experiments.

Throughout the paper, we consider an $n$-run one-shot experiment for arranging $m$ components which are denoted as $c_1, \ldots, c_m$. Its experimental design $D=(X,O)$ includes runs $w_i = (x_i, o_i)$ for $i=1,\ldots,n$, where $x_i = (x_{i1}, \ldots, x_{im})$ includes the quantities of components and $o_i = (o_{i1}, \ldots, o_{im})$ is the sequence of arranging components.
As an illustration, if we add $1$ mM drug $c_2$ first, $2$ mM drug $c_3$ second, and $3$ mM drug $c_1$ last, we have $x = (1,2,3)$ and $o = (c_2, c_3, c_1) = (2,3,1)$.
In this paper, we propose novel algebraic constructions for optimal quantitative-sequence (QS) designs $D=(X,O)$ with run size $n$ can be multiples of $m$.
The resulting designs are proved to be marginally coupled, pair-balanced, space-filling, and asymptotically orthogonal. According to practical needs, they can fit both Gaussian process models and linear models in either one-shot or sequential experiments (active learning).

The remainder of the paper is organized as follows. In Section \ref{sec2},  we discuss the optimality criteria for the quantitative design $X$, the sequence design $O$, and their joint design $D=(X,O)$. In Section~\ref{sec3}, we propose novel construction methods for optimal QS designs with $n=m$ and show their theoretical properties.
In Section~\ref{sec4}, we discuss the general case of $n=km$ ($k \ge 2$).
Section \ref{sec5} presents a simulation case of the traveling salesman problem, demonstrating the application scenario of the constructed design and its advantages over other approaches.
Section \ref{sec6} concludes and discusses some future work. Appendix A provides a catalog of the QS designs obtained with small run-sizes and two algorithms used to find the optimal designs.
Appendix B includes all the proofs.


\section{Optimality criteria for QS designs D=(X,O)}\label{sec2}

The existing literature on modeling order-of-addition designs primarily employs linear models, including the pairwise ordering (PWO) model \citep{van1995design, voelkel2019design} and the component-position (CP) model \citep{yang2018,yang2023ordering}, among others. \cite{xiao2021} proposed a mapping-based additive Gaussian process (MaGP) model, which demonstrates strong predictive and optimization performance. In this work, we primarily investigate the optimal design properties under the MaGP model. Designs possessing these optimal properties can serve as initial designs for the active learning framework proposed in \cite{xiao2021}. Moreover, these optimal properties also enhance the estimation and prediction performance of the PWO and CP models as discussed in the following parts. The criterion used in this study was also adopted by \cite{xiao2021} and justified from the perspective of the MaGP model.

\subsection{Criteria for the quantitative design X}
\label{crix}
Space-filling Latin hypercube designs are desirable for quantitative factors \citep{LinTang2015, lin2016general, xiao2017construction, Santner2018}. In design $D=(X,O)$, we require that the quantitative part $X$ be a Latin hypercube design (LHD) and consider its space-filling property under the popular maximin distance criterion \citep{Johnson1990}.

Throughout the paper, denote $\mathcal{Z}^+_n =  \left\{1,\ldots,n\right\}$ and $\mathcal{Z}_n =  \left\{0,\ldots,n-1\right\}$ for any positive integer $n$.
Let $X(n, m)$ be an $n$-run LHD with $m$ factors where each column is a permutation of all elements in $\mathcal{Z}_{n}^+$ or $\mathcal{Z}_n$.
The maximin distance criterion seeks to scatter design points to fill the experimental domain such that the minimum distance between points is maximized. Define the $L_q$-distance between two runs $x_i$ and $x_j$ as
$d_q(x_i, x_j) =  \left\{ \sum_{k=1}^{m} \vert x_{ik}-x_{jk}\vert ^q \right\}^{1/q}
$ where $q$ is a positive integer. Define the $L_q$-distance of the design $X$ as $d_q(X) = \text{min} \{d_q(x_i, x_j),  1 \leq i<j \leq n \}$.
A design $X$ is called a maximin $L_q$-distance design if it has the largest $d_q(X)$ value. In this paper, we consider the popular Manhattan ($q=1$) and Euclidean ($q=2$) distances. For any $n\times m$ LHD, by \cite{zhou2015space}, the average row pairwise $L_1$-distance is $(n+1)m/3$, and the average squared row pairwise $L_2$-distance is $n(n+1)m/6$.  Since the minimum distance cannot exceed the integer part of the average, we have the following Lemma \ref{dupper}.

\begin{lemma}\label{dupper}
	For an $n$-run and $m$-factor LHD $X(n,m)$, we have
	$$
	d_1(X) \leq d_{1,upper} = \lfloor (n+1)m /3\rfloor,
        $$
        $$
    d_2(X) \leq d_{2,upper} = \sqrt{ \lfloor  n(n+1)m/6 \rfloor },
        $$
	where $\lfloor x \rfloor$ is the largest integer not exceeding $x$.
\end{lemma}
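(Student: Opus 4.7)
The plan is to reduce the lemma to the elementary principle that the minimum of a finite set of non-negative integers is at most the floor of their average. The key observation is that for an LHD with integer entries, each pairwise $L_1$-distance is a positive integer, and each squared pairwise $L_2$-distance is also a positive integer, so once we compute the average pairwise $L_1$- and squared $L_2$-distances, we can apply the floor immediately.

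First I would compute the sum of pairwise $L_1$-distances column by column. Fix a column $k$ of $X$; by definition of an LHD, this column is a permutation of $\mathcal{Z}_n^+$, hence
\begin{equation*}
\sum_{1\le i<j\le n}|x_{ik}-x_{jk}| \;=\; \sum_{1\le a<b\le n}(b-a) \;=\; \sum_{d=1}^{n-1}d(n-d) \;=\; \frac{n(n-1)(n+1)}{6},
\end{equation*}
which does not depend on the particular permutation. Summing over $k=1,\ldots,m$ and dividing by the number of row pairs $\binom{n}{2}=n(n-1)/2$ gives the average pairwise $L_1$-distance $(n+1)m/3$. Since $d_1(X)$ is the minimum over all $\binom{n}{2}$ row pairs and each pairwise $L_1$-distance is a positive integer, we conclude $d_1(X)\le \lfloor (n+1)m/3\rfloor$.

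Next I would handle the $L_2$-distance by the same averaging argument applied to the squared distance. Using the identity $\sum_{i<j}(x_i-x_j)^2 = n\sum_i x_i^2-(\sum_i x_i)^2$ and the standard sums $\sum_{i=1}^{n}i=n(n+1)/2$ and $\sum_{i=1}^n i^2=n(n+1)(2n+1)/6$ for a permutation of $\mathcal{Z}_n^+$ yields $n^2(n-1)(n+1)/12$ per column. Summing over the $m$ columns and dividing by $\binom{n}{2}$ gives the average squared pairwise $L_2$-distance $n(n+1)m/6$. Because squared pairwise $L_2$-distances are positive integers, the minimum cannot exceed the floor of this average, so $d_2(X)^2\le \lfloor n(n+1)m/6\rfloor$, and the second inequality follows by taking square roots (noting that $d_2(X)^2$ equals the minimum squared distance).

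The argument is entirely elementary and there is no real obstacle; the only point requiring minor care is the integrality justification for pulling down the floor, which relies on the LHD convention that entries are drawn from $\mathcal{Z}_n^+$ (or $\mathcal{Z}_n$). The closed-form evaluations of $\sum_{d=1}^{n-1}d(n-d)$ and of the quadratic power sums are routine and could even be cited directly from \cite{zhou2015space}.
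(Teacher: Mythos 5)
Your proposal is correct and follows essentially the same route as the paper, which obtains the averages $(n+1)m/3$ and $n(n+1)m/6$ (citing \cite{zhou2015space}) and then bounds the minimum by the floor of the average using integrality. Your column-by-column derivations of those averages are accurate and simply make explicit the computation the paper delegates to the citation.
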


\subsection{Criteria for the sequence design O}
\label{crio}

A good sequence design $O$ should be pair-balanced \citep{xiao2021}. Since one key reason explaining why sequences matter is that components arranged in adjacent orders may have significant synergistic or antagonistic interactions, a good sequence design should test all such pairs of components.
Denote $t_{i,j}$ as the number of appearances of sub-sequence ``$i,j$'' in all rows of the design $O$ for $1\leq i \neq j \leq m$. A sequence design $O$ is called pair-balanced if each component is preceded by every other component the same number of times; that is, all possible $t_{i,j}$'s are equal. We illustrate this in the following Example~\ref{eg:o}.
The pair-balanced property is also studied in the literature on crossover design \citep{Bose2009}. However, the experiments that involve sequences in this paper are different from the crossover trials. Here, only the endpoint response is measured and the component effects are assumed to be dependent on their arrangement orders \citep{lin2019order, robert2018}.

\begin{table*}[htbp]
\caption{Comparison of designs' $t_{i,j}$ pairs.\label{method1}}
\tabcolsep=0pt
\begin{tabular*}{\textwidth}{@{\extracolsep{\fill}}ccccccccccccc@{\extracolsep{\fill}}}
\toprule%
			$t_{i,j}$ & $c_1c_2$ & $c_1c_3$ & $c_1c_4$ & $c_2c_1$ & $c_2c_3$ & $c_2c_4$ & $c_3c_1$ & $c_3c_2$ & $c_3c_4$ & $c_4c_1$ & $c_4c_2$ & $c_4c_3$ \\
			\midrule
			$O_1$ &   3 &   0 &   0 &    0 &   3 &   0 &   0 &   0 &   3 &   3 &   0 &   0 \\
			$O_2$ &   1 &   1 &   1 &   1 &   1 &   1 &   1 &   1 &   1 &   1 &   1 &   1 \\
			\botrule
\end{tabular*}
\end{table*}

\begin{example}
\label{eg:o}
Consider a drug combination experiment involving four drug components. Table \ref{method1} lists the $t_{i,j}$ pairs of two candidate designs $O_1$ and $O_2$. When arranging drug sequences, researchers need to know the possible synergistic or antagonistic pairwise interactions between drugs. For example, if Drug $c_1$ and $c_2$ have a strong synergistic interaction, they should be adjacent in the sequence; if they have strong antagonistic effects, they should be apart. Under this consideration, it is clear that the pair-balanced design {$O_2$} is better where each adjacent pair appears once (equal number of times) for testing the interactions.
$$
O_1 =
\begin{pmatrix}
  	c_1 & c_2 & c_3 & c_4 \\
	c_2 & c_3 & c_4 & c_1 \\
	c_3 & c_4 & c_1 & c_2 \\
	c_4 & c_1 & c_2 & c_3 \\
\end{pmatrix},
O_2 =
\begin{pmatrix}
  c_1 & c_2 & c_3 & c_4 \\
  c_2 & c_4 & c_1 & c_3 \\
  c_3 & c_1 & c_4 & c_2 \\
  c_4 & c_3 & c_2 & c_1 \\
\end{pmatrix}.
$$
\end{example}

In the current literature, researchers propose to use space-filling or orthogonal sequence designs, i.e., order-of-addition designs \citep{peng2017, voelkel2019design, wang2020, yang2018}. To gain more information from the data, space filling designs minimize similarities among runs, and orthogonal designs alleviate the associations among factors. Note that the sequence design $O$ here consists of qualitative inputs, for example $o = (c_2, c_3, c_1) = (2,3,1)$ in Section~\ref{intro}.
Thus, we adopt the maximin Hamming distance criterion to measure the space-filling property of design $O$. The Hamming distance counts the number of positions in two runs where the corresponding levels differ, which is widely used for qualitative levels. Denote the Hamming distance between the $i$th and $j$th runs in $O$ as $d_H(o_i, o_j) =  \sum_{k=1}^{m} I(o_{ik}, o_{jk})$ where the indicator function $I(o_{ik}, o_{jk}) = 1$ if $o_{ik} \neq o_{jk}$; otherwise 0. Let the Hamming-distance of design $O$ be $d_H(O) = \text{min} \{d_H(o_i, o_j) ,  1\leq i<j\leq n \}.$ The maximin Hamming distance design seeks to maximize $d_H(O)$ to achieve space-filling.

\begin{lemma}\label{hupper}
For an $n$-run sequence design $O$ arranging $m$ components,
\begin{equation*}
d_H(O) \leq d_{H,upper}=\left\{\begin{array}{ll}
m, & \ n\leq m, \\
m-1, & \ n > m.
\end{array}
\right.
\end{equation*}
\end{lemma}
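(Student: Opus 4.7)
The plan is to treat the two regimes separately and observe that each row of $O$ is a permutation of the $m$ components, so $d_H(o_i,o_j)\le m$ always.

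For the case $n\le m$, the bound $d_H(O)\le m$ is immediate from the trivial fact that any two rows, each of length $m$, can disagree in at most $m$ coordinates. Nothing beyond that is required; I would simply note that equality is attainable (e.g., by taking rows that form a derangement of one another), so the bound is sharp.

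For the case $n>m$, the key observation is a pigeonhole argument column-by-column. Fix any column index $k\in\{1,\dots,m\}$. The entries $o_{1k},\dots,o_{nk}$ all lie in the $m$-element set of component labels, and since $n>m$ there must exist two distinct rows $i\ne j$ with $o_{ik}=o_{jk}$. Those two rows then agree in at least one position, so $d_H(o_i,o_j)\le m-1$. Taking the minimum over all pairs gives $d_H(O)\le d_H(o_i,o_j)\le m-1$, as required.

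I do not anticipate any real obstacle: the only subtlety is recognizing that, unlike for unrestricted qualitative designs, the bound in the regime $n\le m$ is simply the length of a row, and that the improvement to $m-1$ when $n>m$ comes from pigeonhole rather than from any structural property of permutations (the fact that two distinct permutations of $\{1,\dots,m\}$ cannot have Hamming distance exactly $1$ is not needed here, since we want an upper bound on the minimum distance, not a gap result). I would present the two cases in a single short proof with the pigeonhole step highlighted.
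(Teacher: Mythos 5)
Your proof is correct. The paper states Lemma~\ref{hupper} without providing a proof, and your argument is exactly the natural justification: the bound $m$ for $n\leq m$ is just the row length, and for $n>m$ the column-wise pigeonhole step (any fixed column of $O$ takes values in an $m$-element set, so $n>m$ rows force two rows to agree in that column) correctly yields $d_H(O)\leq m-1$; you are also right that the fact that two distinct permutations cannot have Hamming distance exactly $1$ is not needed for this upper bound.
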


Orthogonal designs are widely used in experimentation \citep{hedayat1999, chen2022study}. Orthogonality is typically classified into combinatorial orthogonality and column orthogonality. Combinatorial orthogonality generally requires the design to be a strength-2 orthogonal array, meaning that every pair of columns contains all level combinations with equal frequency. In contrast, column orthogonality only requires that the correlation coefficient between any two columns of the design is zero. Combinatorial orthogonality implies column orthogonality, but the reverse is not necessarily true. For a sequential design $O$, where each row represents a permutation of all components, the orthogonality defined by \cite{yang2018} is an extension of traditional combinatorial orthogonality. Specifically, it requires that every pair of columns $(i, j)$, where $i \neq j $, exhibits all level combinations with equal frequency. However, this type of orthogonality imposes a constraint that the number of rows $n$ must be a multiple of $m(m-1)$. Since this paper focuses on scenarios with a smaller number of rows (e.g. $n = m, 2m, \dots$), combinatorial orthogonality cannot be satisfied. Although a sequential design with good column orthogonality does not necessarily exhibit good combinatorial orthogonality, \cite{LinTang2015} suggests that a well-constructed space-filling design or an approximately combinatorially orthogonal design should be column-orthogonal or approximately column-orthogonal. This implies that good sequential designs can be easily selected from a set of column-orthogonal or approximately column-orthogonal designs. Moreover, the criterion for column orthogonality is computationally simple. Given these considerations, we adopt the column orthogonality for selecting $O$ and refer to it simply as "orthogonality" in this work for convenience.

Define the average absolute correlation of design $O$ as $r_{ave}(O)= \sum_{u\neq v} \vert r_{uv}(O) \vert / \{m(m-1)\}$ where $r_{uv}(O)$ is the correlation between the $u$th and $v$th columns in $O$ and $u,v=1,\ldots,m$. An orthogonal design has $r_{ave}(O)=0$. An asymptotically orthogonal design has $r_{ave}(O) \rightarrow 0$ as the run size $n \rightarrow \infty$. Generally speaking, designs with low values of $r_{ave}$ are preferred \citep{tang1997method, sun2011construction,wang2018optimal}. In general, optimal sequence designs $O$ should be pair-balanced, space-filling and orthogonal (or asymptotically orthogonal). The first two criteria among these three were proposed by \cite{xiao2021} and utilized in optimizing the initial design for the MaGP model. The orthogonality criterion also contributes to improving the space-filling properties and combinatorial orthogonality of $O$, thereby enhancing the modeling performance of MaGP, CP, and PWO.

\subsection{Criterion for joining designs X and O}

\cite{Deng2015} proposed the marginally coupled designs (MCDs) as a new class of optimal designs for experiments with both quantitative and qualitative factors. The MCD maintains an economic run size with many attractive properties, where the design points for the quantitative factors form a sliced Latin hypercube design \citep{qian2012sliced}. For the construction of MCDs, refer to \cite{Deng2015,he2017construction,he2019construction,yang2023doubly}.
Due to the semidiscrete nature of sequence factors, the marginally coupled structure is also desirable for QS designs. Note that the entries in the sequence design $O$ (i.e. the components to be arranged) are qualitative. A marginally coupled QS design $D=(X,O)$ requires that the quantitative part $X$ is an LHD and for each level of any factor in $O$ the corresponding design points in $X$ also form an LHD. This structure for $D$ with the special design size $n=m$ is straightforward, and in this paper we focus on the general cases of $n \geq 2m$.
Refer to Examples~\ref{ex0} and~\ref{ex1} in Section~\ref{sec4} for illustrations of the marginally coupled structure.

\section{Optimal QS design of run size \lowercase{$n=m$}}\label{sec3}

\subsection{A special case of m}
\label{constructnm}
We start from the case when the run size $n$ equals to the number of components $m$ for $m = p-1$ and $p$ being any odd prime. Such small QS designs are suitable as initial designs in sequential experiments (i.e., active learning) under Gaussian process models \citep{xiao2021}. Larger run sizes are required in one-shot experiments, which will be discussed in the next section.

A good lattice point set $D_0$ is a $p \times m$ matrix whose $i$th row is $h \times i \text{ (mod } p\text{)}$ where $i = 1, \ldots, p$ and vector $h=(1, \ldots, m)$ \citep{zhou2015space}. Let $W: \mathcal{Z}_p \rightarrow \mathcal{Z}_p$ be the Williams transformation \citep{williams1949, wang2018optimal}, where
\begin{equation}
\label{wt}
W(x)=\left\{\begin{array}{ll}
2x, & \ 0\leq x < p/2, \\
2(p-x)-1, & \ p/2\leq x \leq p-1.
\end{array}
\right.
\end{equation}
\cite{wang2018optimal} proposed the following three-step procedure to construct maximin $L_1$-distance LHDs for quantitative factors only.
\begin{step}
Generate a good lattice point set $D_0$.
\end{step}
\begin{step}
For any $b\in  \left\{0,\ldots,p-1\right\}$, generate designs $D_b=D_0+b \text{ (mod } p\text{)}$ and designs $E_b=W(D_b)$.
\end{step}
\begin{step}
Among the $p$ possible designs $E_b$,  find the best one under the maximin distance criterion.
\end{step}

The last rows of $D_b$ and $E_b$ are $(b, \ldots, b)$ and $(W(b), \ldots, W(b))$, respectively. Define the leave-one-out designs $\tilde D_b$ and $\tilde E_b$ by deleting the last rows of $D_b$ and $E_b$ and then reordering their levels so that $\tilde D_b$ and $\tilde{E}_b$ are $m \times m$ LHDs with levels in $\mathcal{Z}^+_{m} = \left\{1,\ldots,m\right\}$.

\begin{theorem}\label{lem:balance}
	Let $p$ be any odd prime and $n=m=p-1$. Let the sequence design $O$ be either the leave-one-out design $\tilde D_b$ or $\tilde  E_b$ for any $b\in \mathcal{Z}_p = \left\{0,\ldots,p-1\right\}$. Then, \\
  (i) $O$ is a Latin square whose rows and columns are permutations of $\left\{1,\ldots,m\right\}$; \\
  (ii) $O$ is a pair-balanced design, where all possible adjacent component pairs appear once, i.e. $t_{i,j}=1$ for $1\leq i \neq j \leq m$; \\
  (iii) $O$ is a maximin Hamming distance design with $d_H(O) = m$ which achieves the upper-bound in Lemma~\ref{hupper}.
\end{theorem}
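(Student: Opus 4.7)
The approach is to work directly with the explicit algebraic form of the entries of $\tilde D_b$ and $\tilde E_b$. Since the three claimed properties (Latin-square structure, pair-balance, and maximin Hamming distance) are all invariant under any bijective relabeling of the symbol set, the specific ``reordering of levels'' step is immaterial; I would adopt the most convenient choice, namely subtract $b$ (respectively $W(b)$) modulo $p$ from every entry. Under this convention, the $(i,j)$-entry of $\tilde D_b$ becomes simply $ij \pmod{p}$, while that of $\tilde E_b$ becomes $W(ij+b)-W(b) \pmod{p}$, for $i,j \in \{1,\ldots,m\}$. Both lie in $\{1,\ldots,m\}$ because $ij \not\equiv 0 \pmod{p}$ on this range and $W$ is a bijection of $\mathcal{Z}_p$.

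For part (i), the key fact is that for each $i\in\{1,\ldots,m\}$ the map $j\mapsto ij \pmod{p}$ is a bijection of $\mathcal{Z}_p\setminus\{0\}$ (since $\gcd(i,p)=1$), so each row of $\tilde D_b$ is a permutation of $\{1,\ldots,m\}$; by symmetry in $i$ and $j$ so are the columns. For $\tilde E_b$, composing with the bijection $W$ and translating by $-W(b)$ preserves the permutation property, so both designs are Latin squares.

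For part (ii), I would count: for any ordered pair $(u,v)$ with $1\le u\neq v\le m$, find all $(i,j)$ with $1\le i\le m$ and $1\le j\le m-1$ such that row $i$ displays $u$ in column $j$ and $v$ in column $j+1$. In $\tilde D_b$, the equations $ij\equiv u$ and $i(j+1)\equiv v \pmod{p}$ give $i\equiv v-u\not\equiv 0 \pmod{p}$, and then $j\equiv u\cdot i^{-1} \pmod{p}$; the degenerate case $j=m=p-1$ would force $v\equiv 0$, which is excluded. Hence $(i,j)$ is uniquely determined. In $\tilde E_b$, the substitution $u^\ast = W^{-1}(u+W(b))-b$ and $v^\ast = W^{-1}(v+W(b))-b$ modulo $p$ reduces the corresponding system to the $\tilde D_b$ case; bijectivity of $W$ guarantees $u^\ast \neq v^\ast$ and $u^\ast,v^\ast \neq 0$. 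Since the total number of valid adjacent positions equals the total number of ordered pairs, $m(m-1)$, uniqueness immediately gives $t_{u,v}=1$ for every $u\neq v$.

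Claim (iii) then follows immediately from (i): two distinct rows of a Latin square disagree in every column, so $d_H(o_i,o_{i'})=m$ for all $i\neq i'$, matching the upper bound in Lemma~\ref{hupper}. The main obstacle I anticipate is the $\tilde E_b$ portion of part (ii), where one must verify that inserting the Williams transformation does not disturb the ``one adjacent occurrence per pair'' structure; this is handled cleanly by the $W^{-1}$ substitution reducing it to the linear $\tilde D_b$ count.
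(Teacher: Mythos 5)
Your proposal is correct and follows essentially the same route as the paper: both arguments rest on the fact that adjacent entries in row $k$ of the good lattice point set differ by $k \pmod p$, so the ordered pair $(u,v)$ pins down the row $k\equiv v-u$ and then the position uniquely, and both dispose of the general $b$ and the Williams-transformed case by observing that Latin-square structure, pair-balance, and Hamming distance are invariant under bijective relabeling of the symbols. The only cosmetic differences are that you prove (i) and (iii) directly where the paper cites the good-lattice-point literature, and you count by fixing the pair $(u,v)$ rather than fixing the leading component $i$ and varying the row.
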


Theorem \ref{lem:balance} guarantees the pair-balanced and space-filling properties of the sequence designs constructed by either setting $O = \tilde D_b$ or $O = \tilde  E_b$ for any $b\in \mathcal{Z}_p$. To further improve the sequence design $O$ in terms of orthogonality, we propose to choose the best design $O = \tilde  E_{b^*_1}$ which has the lowest $r_{ave}$ value among all possible $b\in \mathcal{Z}_p$; that is, we select
\begin{equation}
\label{b1_star}
b^*_1 = \arg \min_{b\in \mathcal{Z}_p} r_{ave} (\tilde  E_{b}).
\end{equation}

\begin{theorem}
\label{theo:XO_m}
	Let $p$ be any odd prime, and $n=m=p-1$. Let the sequence design $O=\tilde E_{b_1^*}$ with $b^*_1$ determined by \eqref{b1_star}. Let the quantitative design $X = \tilde E_{b^*_2}$ with $b^*_2 = W^{-1} ( (p-1)/2 \pm c)$ where $W^{-1}(\cdot)$ is the inverse of the Williams transformation defined in \eqref{wt} and $c = \lfloor (p^2-1)/12 \rfloor$ if $c^2 + 2(c+1)^2 \geq (p^2-1)/4$; otherwise, $c = \lfloor (p^2-1)/12 \rfloor + 1$.
	Then, the constructed QS design $D=(X,O)$ has the following properties in addition to those in Theorem~\ref{lem:balance}: \\
	(i) $O$ is a Hamming equidistant, pair-balanced and asymptotically orthogonal design with $$r_{ave}(O) < \mathcal{O}(1/m) \rightarrow 0 \text{  as  } m\rightarrow \infty;$$
	(ii) $X$ is an asymptotically maximin $L_1$-distance design which also has large $L_2$-distance:
  $$d_1(X) \geq (1-\mathcal{O}(1/m)) d_{1,upper} \rightarrow d_{1,upper} \text{ \  as  \ } m\rightarrow \infty,$$
  $$d_2(X) \geq  \sqrt{{2}/{3}} (1-\mathcal{O}(1/m))  d_{2,upper} \rightarrow 0.817d_{2,upper} \text{ \  as  \ } m\rightarrow \infty ,$$
  where $\mathcal{O}$ is the big O notation and the upper-bounds $d_{1,upper}$ and $d_{2,upper}$ are given in Lemma \ref{dupper}.
\end{theorem}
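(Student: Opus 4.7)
I would treat parts (i) and (ii) separately. Part (i) extends Theorem~\ref{lem:balance} by adding an orthogonality guarantee on $O = \tilde E_{b_1^*}$, while part (ii) is a distance-geometric analysis of the leave-one-out Williams-transformed design $\tilde E_{b_2^*}$ for the explicit $b_2^*$ given in the statement.

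For (i), the Hamming equidistance and pair-balance conclusions for any $\tilde E_b$ are already supplied by Theorem~\ref{lem:balance}, so only the orthogonality bound $r_{ave}(\tilde E_{b_1^*}) = \mathcal{O}(1/m)$ requires new work. I would first express each pairwise column correlation $r_{uv}(\tilde E_b)$ as the rescaled inner product of columns $u$ and $v$ of $E_b$, plus a small boundary correction from deleting the constant row $(W(b),\ldots,W(b))$ and relabeling the surviving levels to $\{1,\ldots,m\}$. Using that $E_b = W(D_b)$ with $D_b$ consisting of shifted multiples of $h = (1,\ldots,m)$ modulo $p$, I would evaluate $\sum_{b \in \mathcal{Z}_p} r_{uv}(\tilde E_b)$. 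The shift symmetry $b \mapsto b+t$ permutes the rows of $E_b$, so this sum telescopes into a character-style expression that is $\mathcal{O}(1)$ for each pair $u \neq v$. Averaging and pigeonholing over the $p$ choices of $b$ then give $r_{ave}(\tilde E_{b_1^*}) \leq \frac{1}{p}\sum_b r_{ave}(\tilde E_b) = \mathcal{O}(1/p) = \mathcal{O}(1/m)$.

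For (ii), the starting point is the $L_1$-distance lower bound for $E_b$ already established in \cite{wang2018optimal}, namely $d_1(E_b) \geq (1 - \mathcal{O}(1/p))\, d_{1,upper}$ for a suitable $b$. The passage from $E_b$ to $\tilde E_b$ removes the constant row $(W(b),\ldots,W(b))$ and renumbers the $m = p-1$ surviving levels as $\{1,\ldots,m\}$. A direct case split shows that, for any two surviving rows, the $L_1$ distance drops by exactly the number of columns in which $W(b)$ strictly separates the two corresponding entries of $E_b$. The explicit choice $W(b_2^*) = (p-1)/2 \pm c$ with $c$ as in the statement is calibrated so that this decrease is bounded by $\mathcal{O}(1)$ on the minimum-distance row pairs; the inequality $c^2 + 2(c+1)^2 \geq (p^2-1)/4$ is precisely the combinatorial certificate underlying this calibration. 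Combining gives $d_1(\tilde E_{b_2^*}) \geq (1 - \mathcal{O}(1/m))\, d_{1,upper}$, which is the $L_1$ claim. The $L_2$ statement is then essentially free: Cauchy--Schwarz yields $d_2(X) \geq d_1(X)/\sqrt{m}$, and a direct computation using $d_{1,upper} \sim m(m+1)/3$ and $d_{2,upper} \sim m\sqrt{(m+1)/6}$ shows $d_{1,upper}/(\sqrt{m}\, d_{2,upper}) = \sqrt{2/3}\,(1 + o(1))$, which produces exactly the asymptotic constant $\sqrt{2/3}$.

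The main obstacle I foresee is the case analysis in (ii) certifying that the specific $c$ defined in the statement is the right trade-off, i.e., that with this particular choice the worst-case row pair of $\tilde E_{b_2^*}$ suffers only $\mathcal{O}(1)$ loss in $L_1$ distance under deletion and relabeling. The orthogonality argument in (i), while requiring careful bookkeeping across the two branches of the Williams transformation~\eqref{wt}, is a comparatively direct consequence of the shift symmetry of good lattice point sets.
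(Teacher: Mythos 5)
Your part (ii) follows essentially the paper's route: the paper simply invokes Theorem~3 of \cite{wang2018optimal}, which already gives $d_1(\tilde E_{b_2^*}) \geq (p^2-7)/3 + (1/3)\sqrt{(p^2-1)/3}-(p-1) = (1-\mathcal{O}(1/m))\,d_{1,upper}$ for the stated $b_2^*$ (the definition of $c$ and the condition $c^2+2(c+1)^2\geq (p^2-1)/4$ are part of that cited result, not re-derived here), and then passes to $L_2$ via exactly your inequality $\sum_k|x_{ik}-x_{jk}|^2 \geq m^{-1}\bigl(\sum_k|x_{ik}-x_{jk}|\bigr)^2$ together with the asymptotics of $d_{1,upper}$ and $d_{2,upper}$. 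So the step you yourself flag as the main obstacle --- certifying that this particular $c$ limits the loss from deleting the constant row and relabeling --- is precisely the content of the external theorem the paper leans on; your sketch of it is the right picture but is not carried out, and as written part (ii) is a plan rather than a proof.

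The genuine gap is in part (i). The paper obtains $r_{ave}(O)=\mathcal{O}(1/m)$ from a \emph{uniform} bound (Lemma~\ref{lem:r}, quoted from \cite{wang2018optimal}): $r_{ave}(\tilde E_b) < 5(p+1)/(p-2)^2$ for every $b$. Your averaging-plus-pigeonhole argument fails at two points. First, the map $b\mapsto b+t$ is a \emph{level} shift of $D_b$, not a row permutation: the $i$th row of $D_{b+t}$ is $ih+b+t \pmod p$, which coincides with no row $i'h+b$ of $D_b$ unless $t=0$; composed with the nonlinear Williams transformation, such level shifts change the column correlations nontrivially --- that is exactly why the $p$ designs $\tilde E_b$ have different $r_{ave}$ values and why one optimizes over $b$ in \eqref{b1_star}. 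Second, even if you could show $\sum_{b} r_{uv}(\tilde E_b)=\mathcal{O}(1)$ for the \emph{signed} correlations by a character-style cancellation, $r_{ave}$ is defined through $|r_{uv}|$, and a small signed sum gives no control of $\sum_b |r_{uv}(\tilde E_b)|$. Hence the chain $r_{ave}(\tilde E_{b_1^*}) \leq p^{-1}\sum_b r_{ave}(\tilde E_b)=\mathcal{O}(1/p)$ is not established by what you describe; closing it requires a per-$b$ estimate of $|r_{uv}(\tilde E_b)|$ of the kind Lemma~\ref{lem:r} supplies.
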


Theorem \ref{theo:XO_m} presents our design construction of $D=(X,O)$ with the run size $n=m=p-1$ and $p$ being any odd prime. Here, the scalar $b^*_1$ can be easily found by comparing $p$ candidate designs $\tilde  E_{b}$ with $b\in \mathcal{Z}_p$, and $b^*_2$ can be directly calculated.
In Table~\ref{tab:b_and_b1}, we list the values of $b^*_1$ and $b^*_2$ in the cases of $5\leq p < 100$, that is, $4\leq m < 99$.
From Table~\ref{tab:b_and_b1}, we can observe that there are at least two possible values for $b^*_1$, and the following Proposition \ref{prop:twob} sheds some light on this phenomenon.

\begin{table*}[htbp]
\caption{Lists of $b_1^*$ and $b^*_2$ values for some small odd prime $p$.\label{tab:b_and_b1}}
\tabcolsep=0pt
\begin{tabular*}{\textwidth}{@{\extracolsep{\fill}}ccc ccc ccc@{\extracolsep{\fill}}}
\toprule%
$p$  & $b^*_1$  & $b^*_2$   & $p$   & $b^*_1$  & $b^*_2$ & $p$  & $b^*_1$  & $b^*_2$  \\
\midrule
    5    &1,3,4 &  3,4  &  31    & 4,11  & 3,12  & 67    & 42,58 & 7,26   \\
    7     & 1,2 &  4,6  & 37    & 23,32 & 4,14   & 71    & 9,26  & 43,63  \\
   11     & 7,9 &  1,4  & 41    & 5,15  & 3,12   & 73    & 9,27   & 44,65 \\
   13    & 8,11 &  1,5  & 43    & 27,37 & 4,14   & 79    & 10,29  & 8,31 \\
   17    & 2,6  & 10,15  & 47    & 6,17  & 4,16   & 83    & 52,72  & 50,74 \\
   19    & 2,7  &  2,7   & 53    & 33,46 & 26,38  & 89    & 11,33  & 9,35 \\
   23    & 3,8  & 14,20  & 59    & 37,51 & 6,23   & 97    & 12,36  & 10,38 \\
   29    & 18,25 & 3,11  & 61    & 38,53 & 37,54  &       &        &     \\
   \botrule
\end{tabular*}
\end{table*}

\begin{proposition}\label{prop:twob}
For $b,b' \in \mathcal{Z}_p$, $b\neq b'$, $m=p-1$ and $p$ being any odd prime, if $b+b' = (p-1)/2$ or $b+b'=(3p-1)/2$, then
  $r_{ave} (\tilde  E_{b}) = r_{ave} (\tilde  E_{b'})$.
\end{proposition}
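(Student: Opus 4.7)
The plan is to show that $\tilde{E}_{b'}$ arises from $\tilde{E}_b$ by reversing the order of rows and then applying the level reversal $x\mapsto m+1-x = p-x$ to every column; since both of these operations leave every pairwise column correlation unchanged, they in particular preserve $r_{ave}$. Because $(3p-1)/2 \equiv (p-1)/2 \pmod{p}$, the two cases in the hypothesis amount to the single congruence $b+b'\equiv (p-1)/2 \pmod{p}$, which I treat uniformly.

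The algebraic engine of the argument is the following symmetry of the Williams transformation: for every $z\in\mathcal{Z}_p$,
\[
W\bigl(((p-1)/2 - z)\bmod p\bigr) \;=\; p-1-W(z).
\]
I would verify this by splitting into $z\le (p-1)/2$ and $z\ge (p+1)/2$ and substituting the two branches of \eqref{wt}; in both subcases the resulting linear expressions match up. Taking $z=b$ gives in particular $W(b')=p-1-W(b)$, so the level deleted from each column of $E_{b'}$ is the complement of the level deleted from the corresponding column of $E_b$.

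To turn the identity into a row-by-row statement, write the $(i,u)$-entry of $\tilde{E}_b$ as $R_b\bigl(W((ui+b)\bmod p)\bigr)$, where $R_b\colon\mathcal{Z}_p\setminus\{W(b)\}\to\{1,\dots,m\}$ is the order-preserving relabeling. Substituting $i\mapsto p-i$, setting $z=(b-ui)\bmod p$, and using $b+b'\equiv (p-1)/2 \pmod{p}$ give $(ui+b')\bmod p = ((p-1)/2 - z)\bmod p$, and the Williams symmetry then converts $W((ui+b')\bmod p)$ into $p-1-W(z)$. A brief two-case check on whether $W(z)<W(b)$ or $W(z)>W(b)$, which via $W(b')=p-1-W(b)$ also determines which branch $R_{b'}$ lies in, establishes the relabeling compatibility $p-R_b(W(z)) = R_{b'}(p-1-W(z))$. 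Assembling these pieces yields the compact identity
\[
\tilde{E}_{b'}[i,u] \;=\; p \;-\; \tilde{E}_b[\,p-i,\;u\,], \qquad i=1,\dots,p-1,\ u=1,\dots,m.
\]

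This identity says exactly that $\tilde{E}_{b'}$ equals $\tilde{E}_b$ with its rows reversed and each entry replaced by $p-x = m+1-x$. Row reversal is a row permutation and so does not affect any column correlation, while applying the same affine map of slope $-1$ to every column simultaneously multiplies each pairwise correlation by $(-1)\cdot(-1)=1$ and so leaves it unchanged. Hence $r_{uv}(\tilde{E}_b)=r_{uv}(\tilde{E}_{b'})$ for every pair of columns, and averaging the absolute values gives $r_{ave}(\tilde{E}_b)=r_{ave}(\tilde{E}_{b'})$, as claimed. The main obstacle in this plan is pinpointing and verifying the Williams symmetry identity above; once that identity is in hand, the remaining steps are careful bookkeeping with the relabeling functions $R_b$ and $R_{b'}$.
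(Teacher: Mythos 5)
Your proof is correct and follows essentially the same route as the paper's: both establish that $\tilde{E}_{b'}$ is obtained from $\tilde{E}_b$ by reversing the rows and applying the level reversal $x\mapsto p-x$ (equivalently, that the $i$th row of $\tilde{E}_{b'}$ and the $(p-i)$th row of $\tilde{E}_b$ sum to the constant $p$), and then invoke the invariance of column correlations under these operations. The only cosmetic differences are that you unify the two hypotheses into the single congruence $b+b'\equiv(p-1)/2\pmod p$ and verify the Williams symmetry $W\bigl(((p-1)/2-z)\bmod p\bigr)=p-1-W(z)$ by direct case analysis, whereas the paper treats the second case as ``similar'' and derives the same identity via the column-permutation equivalence of two $2\times p$ matrices built from the good lattice point structure.
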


When $m=p-1$ where $p$ is an odd prime and $q=2m+1$ is also an odd prime, the quantitative design $X$ in Theorem~\ref{theo:XO_m} can be further improved under the maximin $L_1$-distance criterion. Let $w: \mathcal{Z}_q \rightarrow \mathcal{Z}_q$ be the modified Williams transformation defined by
$$
w(x)=\left\{\begin{array}{ll}
2x, & \ 0\leq x <q/2 \\
2(q-x), & \ q/2\leq x<q,
\end{array}
\right.
$$
Let $D_0$ be a $(2m+1)\times (2m)$ GLP design and $A_1$ be the $m\times m$ leading principal submatrix of $D_0$.
Following Theorem~4 of \cite{wang2018optimal}, $E = w(A_1)/2$ is an $L_1$-equidistant $LHD(m,m)$ with $d_1(E)=\bar{d}_1 = m(m+1)/3$, where $w(A_1)$ is the $m\times m$ design with all entries in $A_1$ transformed by the modified Williams transformation $w(\cdot)$.

\begin{corollary}
\label{theo:XO_m_cor}
	Let $p$ be an odd prime, $n=m=p-1$. If $2m+1$ is also an odd prime, substitute the quantitative design $X$ in Theorem~\ref{theo:XO_m} by the $L_1$-equidistant $LHD(m,m)$ $E = w(A_1)/2$ defined above. Then $X$ is an exact maximin $L_1$-distance design with $d_1(X) = d_{1,upper}$.
\end{corollary}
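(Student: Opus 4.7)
The plan is to derive Corollary~\ref{theo:XO_m_cor} by combining Theorem~4 of \cite{wang2018optimal}, which supplies the exact $L_1$-distance of $E$, with a short arithmetic check that matches this value against the upper bound in Lemma~\ref{dupper}.

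First, I would invoke Theorem~4 of \cite{wang2018optimal} on the odd prime $q = 2m+1$ and the leading $m \times m$ submatrix $A_1$ of the GLP design $D_0$ to conclude that $X = E = w(A_1)/2$ is an $L_1$-equidistant $LHD(m,m)$ with pairwise $L_1$-distance equal to $m(m+1)/3$. The one structural point worth verifying is that no column of $A_1$ contains a pair $\{x, q-x\}$, so that the 2-to-1 modified Williams map $w$ becomes injective on each column and $E$ is genuinely an LHD after division by 2. This follows from primality of $q$: if $ij \equiv -i'j \pmod{q}$ with $1 \le i, i' \le m$ and $1 \le j \le m$, then $q \mid (i+i')$, which is impossible because $2 \le i + i' \le 2m < q$. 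Consequently, the entries of each column of $w(A_1)$ are precisely $\{2, 4, \ldots, 2m\}$, and dividing by $2$ produces a permutation of $\{1, \ldots, m\}$.

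Next, Lemma~\ref{dupper} with $n = m$ gives $d_{1,upper} = \lfloor m(m+1)/3 \rfloor$, so the corollary reduces to showing $3 \mid m(m+1)$ under the stated hypotheses. If instead $m \equiv 1 \pmod{3}$, then $q = 2m+1 \equiv 0 \pmod{3}$, forcing $q = 3$ by primality and hence $m = 1$; but then $p = m+1 = 2$ is not an odd prime, a contradiction. Therefore $m \not\equiv 1 \pmod{3}$, so $3 \mid m(m+1)$, the quotient $m(m+1)/3$ is an integer equal to its own floor, and $d_1(X) = m(m+1)/3 = d_{1,upper}$, giving exact maximin $L_1$-distance optimality.

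There is no substantive obstacle here: once Theorem~4 of \cite{wang2018optimal} is cited and the column injectivity of $w$ on $A_1$ is justified as above, the remainder is a two-line modular-arithmetic check. The only care point is transcribing that theorem (and its normalization by $2$) into the notation used in this paper, so that the $L_1$-equidistance value it supplies can be compared directly with the Lemma~\ref{dupper} bound.
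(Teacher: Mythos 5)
Your proposal is correct and follows essentially the same route as the paper, which proves the corollary simply by citing Theorem~4 of \cite{wang2018optimal} for $d_1(E)=m(m+1)/3$ and comparing with the bound $d_{1,upper}=\lfloor m(m+1)/3\rfloor$ from Lemma~\ref{dupper}. Your extra check that $3\mid m(m+1)$ (via $2m+1$ prime ruling out $m\equiv 1\pmod 3$) makes explicit the one detail the paper leaves implicit, namely that the equidistant value coincides exactly with the floor in the upper bound.
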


Consider all the cases of $n=m<100$.  When $p=7, 19, 31, 37, 79, 97$, i.e. $m=6, 18, 30, 36, 78, 96$, the condition of Corollary~\ref{theo:XO_m_cor} holds, hence we can take $X$ to be the $L_1$-equidistant LHDs in Corollary~\ref{theo:XO_m_cor}.
These designs can improve the quantitative design $X = \tilde E_{b_2^*}$ in Theorem~\ref{theo:XO_m} with $m=18, 30, 36, 78, 96$ which are not $L_1$-equidistant.
For $m=6$, both $X = \tilde E_{b_2^*}$ and $X=w(A_1)/2$ are $L_1$-equidistant, and we can choose
the one with the {maximum} $L_2$-distances. We illustrate the construction of $n=m=6$ in the following example.

\begin{example}
\label{ex:m=6}
Consider the construction of optimal QS design $D=(X,O)$ with $n=m=6$. First, we construct the leave-one-out Williams transformed good lattice point design $\tilde E_{b_1^*}$ to be the sequence part $O$. Given that $p=m+1=7$ and $h=(1,\ldots,6)$, we construct the good lattice point design $D_0$ whose $i$th row is $h \times i \text{ (mod } p\text{)}$ for $i = 1, \ldots, p$. We choose $b_1^* = 1$ by Table~\ref{tab:b_and_b1}. Then $D_{b_1^*} = D_1=D_0+1 \text{ (mod }7)$ and $E_1=W(D_1)$, where the Williams transformation for $p=7$ is shown in Table~\ref{wleg2}. After removing the last row of $ E_1$ and relabeling the remaining design, we have the required Latin square $O=\tilde E_1$.

$$
D_0 =
\begin{pmatrix}
  		1 & 2 & 3 & 4 & 5 & 6 \\
		2 & 4 & 6 & 1 & 3 & 5 \\
		3 & 6 & 2 & 5 & 1 & 4 \\
		4 & 1 & 5 & 2 & 6 & 3 \\
		5 & 3 & 1 & 6 & 4 & 2 \\
		6 & 5 & 4 & 3 & 2 & 1 \\
		0 & 0 & 0 & 0 & 0 & 0 \\
\end{pmatrix},
E_1 =
\begin{pmatrix}
  	4 & 6 & 5 & 3 & 1 & 0 \\
	6 & 3 & 0 & 4 & 5 & 1 \\
	5 & 0 & 6 & 1 & 4 & 3 \\
	3 & 4 & 1 & 6 & 0 & 5 \\
	1 & 5 & 4 & 0 & 3 & 6 \\
	0 & 1 & 3 & 5 & 6 & 4 \\
	2 & 2 & 2 & 2 & 2 & 2 \\
\end{pmatrix},
$$
$$
\tilde E_1 =
\begin{pmatrix}
  	4 & 6 & 5 & 3 & 2 & 1 \\
	6 & 3 & 1 & 4 & 5 & 2 \\
	5 & 1 & 6 & 2 & 4 & 3 \\
	3 & 4 & 2 & 6 & 1 & 5 \\
	2 & 5 & 4 & 1 & 3 & 6 \\
	1 & 2 & 3 & 5 & 6 & 4 \\
\end{pmatrix}.
$$

\vspace{-0.1in}
\begin{table}[htbp]
\begin{minipage}{\linewidth}
\caption{The Williams transformation for $p=7$.
\label{wleg2}}%
\begin{center}
\begin{tabular}{ccccccccc}
\hline
$x$ && 0 & 1 & 2 & 3 & 4 & 5 & 6 \\
$W(x)$ && 0 & 2 & 4 & 6 & 5 & 3 & 1\\
\hline
\end{tabular}
\end{center}
\end{minipage}
\end{table}
\vspace{-0.1in}

Next, similar to the first step, we can construct the quantitative part $X = \tilde E_{b_2^*}$ using the chosen $b_2^* = 4$ by Table~\ref{tab:b_and_b1}. Here $\tilde E_{b_2^*}$ is an LHD with $d_1(\tilde E_{b_2^*}) = 14$ and $d_2(\tilde E_{b_2^*}) = \sqrt{34}$. Note that the bounds in Lemma~\ref{dupper} are $d_{1,upper}=14$ and $d_{2,upper}=\sqrt{42}$, respectively. Thus, $\tilde E_{b_2^*}$ is a maximin $L_1$-equidistant design.

As $2m+1=13$ is an odd prime, we can also follow Corollary~\ref{theo:XO_m_cor} to construct another
$L_1$-equidistant design as the quantitative part, i.e., $X=E = w(A_1)/2$. We have $d_1(E)=d_1(\tilde E_{b_2^*})=14$  and $d_2(E) = \sqrt{40} > d_2(\tilde E_{b_2^*})$. Therefore, we choose $X=E$ as it performs better under the maximin $L_2$-distance criterion.

Finally, the combined design $D = (X,O)$ is shown in Table~\ref{tab:exam1}. The quantitative design $X$ is a maximin $L_1$-distance design and has large $L_2$-distance ($d_2(X)/d_{2,upper} = 0.976$). The sequence design $O$ has $d_H(O)=6$ (which equals to the $d_{H,upper}$ in Lemma~\ref{hupper}), $r_{ave}(O) = 0.2$, and all $t_{i,j}=1$ for $1\leq i\neq j \leq 6$.
\vspace{-0.1in}
\begin{table}[htbp]
\begin{minipage}{\linewidth}
\caption{The optimal QS design with $n=m=6$.
\label{tab:exam1}}%
\begin{center}
\begin{tabular}{cccccc c cccccc}
\multicolumn{6}{c}{$X$}  && \multicolumn{6}{c}{$O$}   \\
1 & 2 & 3 & 4 & 5 & 6 && 4 & 6 & 5 & 3 & 2 & 1 \\
2 & 4 & 6 & 5 & 3 & 1 && 6 & 3 & 1 & 4 & 5 & 2 \\
3 & 6 & 4 & 1 & 2 & 5 && 5 & 1 & 6 & 2 & 4 & 3 \\
4 & 5 & 1 & 3 & 6 & 2 && 3 & 4 & 2 & 6 & 1 & 5 \\
5 & 3 & 2 & 6 & 1 & 4 && 2 & 5 & 4 & 1 & 3 & 6 \\
6 & 1 & 5 & 2 & 4 & 3 && 1 & 2 & 3 & 5 & 6 & 4 \\
\end{tabular}
\end{center}
\end{minipage}
\end{table}
\end{example}
\vspace{-0.3in}

In Example~\ref{ex:m=6}, the design has $r_{ave}(O) = 0.2$ and looks a bit large for this small design case. However, as $m$ increases, the maximum absolute correlation decreases quickly. For example, when $m=58$, we have $r_{ave}(O) = 0.018$; see the right panel of Figure~\ref{fig1} for more information.

\cite{xiao2021} proposed a sequential experimentation (i.e. active learning) starting from a class of initial designs when the run sizes $n=m=p-1$ for $p$ are odd primes.
Compared to their designs, the proposed ones in this paper are more space-filling and orthogonal, as shown in Figure~\ref{fig1}. The left graph shows the $L_q$-distance ratios $d_q(X)/d_{q,upper}$ of the quantitative designs $X$, where the upper-bounds $d_{q,upper}$ are defined in Lemma~\ref{dupper} for $q=1,2$. The right graph shows the average absolute correlations $r_{ave}(O)$ of the sequence designs $O$. Clearly, our designs are more space-filling in $X$ and have smaller average absolute correlations in $O$. Moreover, as $m$ increases, the $L_1$-distance ratio approaches 100\% and $r_{ave}$ approaches $0$ in our design construction.

\begin{figure*}[htbp]
\centering
\includegraphics[angle=0, scale=0.5]{{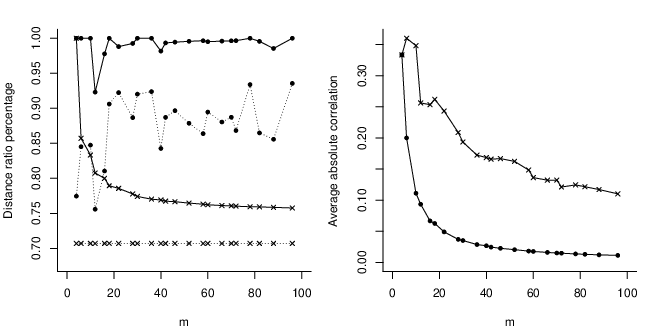}}\par
\caption{Left: $L_1$-distance (solid) and $L_2$-distance (dotted) ratios of quantitative designs $X$ generated by our method (circle) and the competitor's method (cross). Right: $r_{ave}$ values of sequence designs $O$ generated by our method (circle) and the competitor's method (cross).}\label{fig1}
\end{figure*}

\subsection{A more general case of m}\label{sec3.2}

In this subsection, we construct QS designs for some cases where $n=m$ and $m+1$ is not an odd prime number.
Our construction is based on two classes of Latin squares constructed by \cite{yin2022distance} and \cite{williams1949}, respectively.

The following method for constructing Latin squares is proposed by \cite{yin2022distance}.
Let $N>2$ be a positive integer, $m=\phi(N)/2$ and $h=(h_1,\ldots,h_m)$ be all integers less than $\lfloor N/2 \rfloor$ and co-prime to $N$, where $\phi(N)$ is the Euler function that counts the number of positive integers that are less than and coprime to $N$. It is easy to see that $\phi(N)$ is even and thus $m$ is an integer.
Let
\begin{equation}
\label{eq:L}
  L = (l_{ij})_{m\times m},
\end{equation}
where $ l_{ij} = \min\{ h_i \times h_j  \text{ (mod } N\text{)},   N-h_i \times h_j  \text{ (mod } N\text{)}\}$ for $i,j = 1, \ldots, m$.
It is straightforward to show that $L$ is an $m\times m$ Latin square whose rows and columns are permutations of $h_1,\ldots,h_m$.

Now we review another construction of the Latin square proposed by \cite{williams1949}.
Let $m$ be any even positive integer.
Consider the Williams transformation $W(\cdot)$ defined in \eqref{wt} from $\mathcal{Z}_m \rightarrow \mathcal{Z}_m$.
Let $h=(W^{-1}(0),W^{-1}(1),\ldots, W^{-1}(m-1))$ where $W^{-1}(\cdot)$ is the inverse of the Williams transformation.
Obtain an $m\times m$ matrix whose $i$th row is $h + (i-1)$ where $i = 1,\ldots,m$, and replace all the $0$ in the matrix with $m$. Denote $M$ the final matrix.   The following lemma is due to \cite{williams1949}, we also offer a proof in the appendix for completeness.

\begin{lemma}\label{lem:william_square}
  Let $m$ be an even positive integer; then the square matrix $M$ is a pair-balanced Latin square with entries from $\mathcal{Z}_m^+$.
\end{lemma}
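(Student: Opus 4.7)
The plan is to establish the three assertions --- entries in $\mathcal{Z}_m^+$, Latin-square structure, and pair-balance --- in that order. The first two fall out of the construction once we know that the Williams map $W$ is a bijection on $\mathcal{Z}_m$; the third is the substantive part and relies on an explicit description of the adjacent-difference sequence of $h$.

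First I would check that $W$ is a bijection of $\mathcal{Z}_m$. Its first branch sends $\{0,1,\ldots,m/2-1\}$ bijectively to the even residues $\{0,2,\ldots,m-2\}$, while its second branch sends $\{m/2,\ldots,m-1\}$ bijectively to the odd residues $\{1,3,\ldots,m-1\}$, so $h=(W^{-1}(0),\ldots,W^{-1}(m-1))$ is a permutation of $\mathcal{Z}_m$. The $i$th pre-replacement row of $M$ is the cyclic shift $h+(i-1)\pmod{m}$, hence a permutation of $\mathcal{Z}_m$; the $j$th column $\{h_j+(i-1)\pmod{m}:1\le i\le m\}$ equals $\mathcal{Z}_m$. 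Replacing zero entries by $m$ turns each row and column into a permutation of $\mathcal{Z}_m^+$, so $M$ is a Latin square on $\mathcal{Z}_m^+$.

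For pair-balance, the key observation is that because every row of $M$ is a cyclic shift of $h$, the adjacent difference
\[
M_{r,k+1}-M_{r,k}\equiv h_{k+1}-h_k=:d_k\pmod{m}
\]
depends only on the column position $k$, not on the row $r$. So for fixed $k$ the pair $(M_{r,k},M_{r,k+1})$, viewed modulo $m$, takes the form $(x,x+d_k\pmod{m})$ with $x=h_k+r-1\pmod{m}$; as $r$ runs over $1,\ldots,m$, $x$ exhausts $\mathcal{Z}_m$, so each ordered pair of residues with difference $d_k$ occurs exactly once across the $m$ rows (the $0\mapsto m$ relabeling preserves this one-to-one correspondence). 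It therefore suffices to show $\{d_1,\ldots,d_{m-1}\}=\{1,2,\ldots,m-1\}$ with no repetition: each such difference class will then be realized in exactly one column position and contribute each ordered pair with that difference exactly once.

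The crux, then, is the explicit identification of the $d_k$. Unpacking the two branches of $W^{-1}$ yields the closed form $h_{2j-1}=j-1$ and $h_{2j}=m-j$ for $j=1,\ldots,m/2$. Plugging in, $d_{2j-1}=(m-j)-(j-1)=m-2j+1$ for $j=1,\ldots,m/2$, sweeping out the odd residues $m-1,m-3,\ldots,1$; and $d_{2j}\equiv j-(m-j)\equiv 2j\pmod{m}$ for $j=1,\ldots,m/2-1$, sweeping out the nonzero even residues $2,4,\ldots,m-2$. Together they exhaust $\{1,\ldots,m-1\}$ exactly once, so every ordered pair $(i,j)$ with $i\neq j$ appears as an adjacent pair in $M$ exactly once, giving $t_{i,j}=1$ and proving pair-balance. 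The main obstacle is nothing deep --- it is the index bookkeeping in this last step (isolating the two arithmetic progressions coming from the even- and odd-indexed entries of $h$ and matching them to the two residue classes mod $m$); everything else is immediate from the cyclic-shift structure of $M$.
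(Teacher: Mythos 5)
Your proof is correct and takes essentially the same route as the paper's: both exploit the fact that every row of $M$ is a cyclic shift of $h$, so pair-balance reduces to showing that the first-order difference vector of $h$ is a permutation of $\{1,\ldots,m-1\}$, which you verify via the closed form $h_{2j-1}=j-1$, $h_{2j}=m-j$ --- exactly the explicit $h$ the paper writes down. The only difference is that you spell out the index bookkeeping (the two interleaved arithmetic progressions of differences) that the paper states without detail.
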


In the following, we call $M$ the Williams Latin square. It can be used directly as the sequence design, but unlike the construction in \ref{constructnm}, the average absolute correlation $r_{ave}(M)$ can be large; see Example~\ref{ex:m=8}. To optimize $M$ according to the orthogonality criterion, we propose to use the level permutation method.

Let $O$ be an $m\times m$ sequence design with entries from $\mathcal{Z}_m^+$. By permuting the $m$ levels in $O$, we can obtain a new sequence design of the same size, denoted as $O'$. Obviously, $O$ and $O'$ have the same Hamming distance and pair-balance property, but maybe different $r_{ave}$ values. We use the threshold-accepting algorithm, outlined as Algorithm \ref{alg:searchO} in the appendix, to find the best sequence design.
Let $\tau=IJ$ be the number of iterations, where $I$ and $J$ are the numbers of outer and inner iterations, respectively.
Set the sequence of the thresholds as $T_1=\cdots=T_J$, $T_{J+1}=\cdots=T_{2J}=\gamma T_1$,
$T_{2J+1}=\cdots=T_{3J}=\gamma^2 T_1$, etc., where $\gamma=(T_{\tau}/T_1)^{\frac{1}{I-1}}$.
For more details on the parameter settings of the threshold accepting algorithm, refer to \cite{xiao2018construction}.

Now we construct the QS design $D=(X, O)$.
Suppose $m$ is an even number and there exists an integer $N$ such that $m=\phi(N)/2$.
We take the quantitative design $X$ to be $L$ in \eqref{eq:L} after replacing each $h_i$ in $L$ with $i$ for $i=1,\ldots,m$,
and the sequence design $O$ to be $M^*$, where $M^*$ is the level-permuted Williams Latin square output by Algorithm~\ref{alg:searchO} with $M$ as the initial design.
Then we obtain a QS design $D=(X,O)$ with $n=m$.

\begin{theorem}
\label{theo:XO_other_m}
	Let $D=(X,O)$ be constructed above. Then\\
	(i) $O$ is a Hamming equidistant and pair-balanced Latin square;\\
	(ii) if one of the following condition holds: (a) $N=p$ or $2p$, $m=\phi(N)/2=(p-1)/2$, $p$ an odd prime; (b) $N=4p$, $m=\phi(N)/2 = p-1$, $p$ an odd prime; (c) $N=2^t$, $t\ge 3$, $m=\phi(N)/2 = 2^{t-2}$, then
$X$ is an asymptotically maximin $L_1$-distance design which also has large $L_2$-distance; that is, as $m\rightarrow \infty$,
  $$d_1(X) \geq (1-\mathcal{O}(1/m)) d_{1,upper} \rightarrow d_{1,upper},$$
  $$d_2(X) \geq  \sqrt{{2}/{3}} (1-\mathcal{O}(1/m))  d_{2,upper} \rightarrow 0.817d_{2,upper},$$
  where the upper-bounds $d_{1,upper}$ and $d_{2,upper}$ are given in Lemma \ref{dupper}.
  In particular, $d_1(X)=d_{1,upper}$, that is, $X$ is $L_1$ -equidistant, when condition (a) holds.
\end{theorem}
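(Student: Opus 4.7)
The plan splits naturally into two independent parts matching the theorem's structure.

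\textit{Part (i).} Since $M^*$ is obtained from $M$ by a relabeling of entries, and Lemma~\ref{lem:william_square} already gives that $M$ is a pair-balanced Latin square, both properties are inherited by $M^*$: the relabeling $c \mapsto \sigma(c)$ carries each ordered adjacent pair $(a,b)$ to $(\sigma(a),\sigma(b))$, preserving count uniformity. For Hamming equidistance, I would observe that before the $0 \mapsto m$ adjustment, $M_{ij} - M_{i'j} \equiv (i-i') \pmod m$ at every column $j$; hence any two distinct rows disagree in every position. This survives the cosmetic $0 \mapsto m$ relabeling, so $d_H(M) = m$, and level permutation preserves Hamming distance, giving $d_H(M^*) = m = d_{H,upper}$.

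\textit{Part (ii).} The main work is to analyze the row-pair $L_q$-distances of the LHD $X$ obtained from $L$ in \eqref{eq:L} after relabeling $h_k \mapsto k$. The key structural observation is that the entries $l_{ij}$ are parameterized by the quotient group $G$ of units modulo $N$ by $\{\pm 1\}$, which has order $m$, and the $i$th row of $L$ is a permutation of $\{h_1,\ldots,h_m\}$ obtained by multiplication by $h_i$ in $G$. Fixing two rows $i \neq i'$, set $c = h_{i'} h_i^{-1} \in G$ and let $\pi$ denote the rank function in the natural ordering of $\{h_1,\ldots,h_m\}$; then
\begin{equation*}
d_1(x_i, x_{i'}) = \sum_{k=1}^m |\pi(k) - \pi(k c)|, \qquad d_2^2(x_i, x_{i'}) = \sum_{k=1}^m (\pi(k) - \pi(k c))^2,
\end{equation*}
where the multiplication $k \cdot c$ is interpreted in $G$ via the identification $k \leftrightarrow h_k$. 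Under condition (a), $G$ is cyclic of order $(p-1)/2$; using the explicit description of $\pi$ via quadratic residues (or equivalently, the near-equal spacing of the coprime residues less than $p/2$), I would evaluate both sums in closed form and show that the $L_1$-sum is the same constant for every $c \neq 1$, yielding $L_1$-equidistance at the upper bound. Under conditions (b) and (c), $G$ factors as a direct product; the strategy is to show that $\pi$ differs from a linear map by $\mathcal{O}(1)$ uniformly, producing the stated asymptotic ratios via direct counting for $d_1$ and the Cauchy--Schwarz inequality for $d_2$.

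The principal obstacle is the explicit evaluation of those sums. Under (a), proving exact $L_1$-equidistance for every $c \neq 1$ is a strong claim and rests on a nontrivial identity for sums over quadratic residues of $p$, which I would derive from known symmetries of the Legendre symbol. Under (b) and (c), the difficulty lies in tracking how the minimum-representative map $h_k \mapsto \min\{h_k, N - h_k\}$ interacts with the $\{\pm 1\}$ quotient when $G$ is non-cyclic, and in certifying that the discrepancy between $\pi$ and a linear function is $\mathcal{O}(1)$ uniformly in $k$. Both arguments adapt the row-distance estimates in \cite{yin2022distance} to the rank-relabeled LHD used here, after which the conclusion of the theorem follows by combining the bounds with Lemma~\ref{dupper} on the upper bounds $d_{1,upper}$ and $d_{2,upper}$.
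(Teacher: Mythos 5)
Your part (i) is correct and in fact more complete than the paper's one-line appeal to Lemma~\ref{lem:william_square}: you supply the two missing observations (that the rows of $M$ are distinct additive shifts of $h$ modulo $m$, hence disagree in every coordinate, and that a level permutation preserves both the Hamming distances and the uniformity of the adjacency counts $t_{i,j}$), both of which are needed and both of which you argue correctly. Your $L_1\!\to\!L_2$ conversion via $\sum_k|a_k|^2\ge m^{-1}(\sum_k|a_k|)^2$ is also exactly the step the paper uses (borrowed from its proof of Theorem~\ref{theo:XO_m}(ii)).

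The gap is in part (ii). The paper does not derive the distance bounds at all: it simply invokes Theorem~1 of \cite{yin2022distance} for the exact $L_1$-equidistance $d_1(X)=m(m+1)/3=d_{1,upper}$ under condition (a), and Theorem~4 of \cite{yin2022distance} for the bound $d_1(X)/d_{1,upper}\ge 1-1/(m+1)$ under conditions (b) and (c). You instead set up a from-scratch derivation via the group $G$ of units modulo $N$ quotiented by $\{\pm 1\}$ and the rank function $\pi$ — a correct description of the structure of $L$ — but you then explicitly leave the two central claims (the equidistance identity for every $c\neq 1$ under (a), and the $\mathcal{O}(1)$ discrepancy of $\pi$ from a linear map under (b) and (c)) as acknowledged obstacles rather than proving them. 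These are precisely the hard content of the theorem, so as written the proposal does not establish part (ii); the sentence at the end deferring to the ``row-distance estimates in \cite{yin2022distance}'' is in fact the paper's entire argument, and is what you should lead with rather than append. One further caution: the suggested route through quadratic residues and Legendre-symbol symmetries is unlikely to be the right tool — the known proofs of the equidistance identity for these good-lattice-point-type Latin squares are elementary counting arguments over the residues $\min\{kc \bmod N,\, N-kc \bmod N\}$, and the non-cyclic cases (b), (c) are handled there by the same machinery, not by a separate direct-product analysis.
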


\begin{example}\label{ex:m=8}
 Consider the construction of optimal QS design $D=(X,O)$ with $n=m=8$.
 First, we construct the quantitative design $X$. Let $N=17$ and $h=(1,\ldots,8)$.
 Then we can obtain an $8\times 8$ Latin square $L$ with entries from $\mathcal{Z}_8^+$ through \eqref{eq:L}. This design has $d_1(L)=24$ and $d_2(L) = \sqrt{90}$.
 Here $L$ achieves the bound $d_{1,upper}=24$ in Lemma~\ref{dupper}.

 Next, we construct the sequence design $O$. Let $h=(W^{-1}(0),W^{-1}(1),\ldots, \\
 W^{-1}(7)) = (0,7,1,6,2,5,3,4)$. Based on $h$
 we can construct an $8\times 8$ Williams Latin square $M$.
 It has $r_{ave}(M) = 0.333$, which is unsatisfied.
 Using $M$ as an initial design, we obtain a level-permuted design $M^*$ by Algorithm~\ref{alg:searchO}. This design has $r_{ave}(M^*) = 0.143$, which is smaller than $r_{ave}(M) = 0.333$.

 Finally, take $X=L$ and $O=M^*$, the the combined design $D = (X,O)$ is shown in Table~\ref{tab:ex:m=8}. The quantitative design $X$ is a maximin $L_1$-equidistant design and has large $L_2$-distance ($d_2(X)/d_{2,upper} = 0.968$). The sequence design $O$ has $d_H(O)=8$ (which equals to the $d_{H,upper}$ in Lemma~\ref{hupper}), $r_{ave}(O) = 0.143$, and all $t_{i,j}=1$ for $1\leq i\neq j \leq 8$.
\vspace{-0.1in}
 \begin{table}[htbp]
\begin{minipage}{\linewidth}
\caption{The optimal QS design with $n=m=8$.
\label{tab:ex:m=8}
}%
\begin{center}
\begin{tabular}{cccccccc c cccccccc}
			\multicolumn{8}{c}{$X$}  && \multicolumn{8}{c}{$O$}   \\
1 & 2 & 3 & 4 & 5 & 6 & 7 & 8 &  & 2 & 4 & 3 & 6 & 7 & 8 & 1 & 5 \\
2 & 4 & 6 & 8 & 7 & 5 & 3 & 1 &  & 3 & 2 & 7 & 4 & 1 & 6 & 5 & 8 \\
3 & 6 & 8 & 5 & 2 & 1 & 4 & 7 &  & 7 & 3 & 1 & 2 & 5 & 4 & 8 & 6 \\
4 & 8 & 5 & 1 & 3 & 7 & 6 & 2 &  & 1 & 7 & 5 & 3 & 8 & 2 & 6 & 4 \\
5 & 7 & 2 & 3 & 8 & 4 & 1 & 6 &  & 5 & 1 & 8 & 7 & 6 & 3 & 4 & 2 \\
6 & 5 & 1 & 7 & 4 & 2 & 8 & 3 &  & 8 & 5 & 6 & 1 & 4 & 7 & 2 & 3 \\
7 & 3 & 4 & 6 & 1 & 8 & 2 & 5 &  & 6 & 8 & 4 & 5 & 2 & 1 & 3 & 7 \\
8 & 1 & 7 & 2 & 6 & 3 & 5 & 4 &  & 4 & 6 & 2 & 8 & 3 & 5 & 7 & 1 \\
\end{tabular}
\end{center}
\end{minipage}
\end{table}	
\end{example}

For all even $m$ smaller than 100 satisfying that $m+1$ is not an odd prime, Table~\ref{tab:otherm} shows the distance ratios $d_1(X)/d_{1,upper}$ and $d_2(X)/d_{2,upper}$ and the $r_{ave}(O)$ values of the QS designs $(X,O)$ constructed by the method in this subsection. These designs fill the gap on the run size left by the method in Section~\ref{constructnm}.  We see that all of them have excellent performance under the maximin distance and orthogonality criteria.

\vspace{-0.1in}
\begin{table}[htbp]
\begin{minipage}{\linewidth}
\caption{Distance ratios $d_1(X)/d_{1,upper}$ and $d_2(X)/d_{2,upper}$ of quantitative designs and  $r_{ave}(O)$ values of sequence designs by our method for some small $m$ where $m+1$ is not an odd prime.
\label{tab:otherm}}%
\begin{center}
\begin{tabular}{cccc}\hline
$m$  & $d_1/d_{1,upper}$  & $d_2/d_{2,upper}$ & $r_{ave}$ \\
\hline
  8 & 1 & 0.968 &   0.143 \\
  14 & 1 & 0.958 &   0.077 \\
  20 & 1 & 0.954  &  0.053  \\
  24 & 0.930 & 0.913 &   0.043 \\
  26 &1 &0.951  &  0.040        \\
  32 &0.972 &0.929   & 0.032    \\
  44 &1 &0.948  &  0.023        \\
  48 &1 &0.948  &  0.021        \\
  50 &1 &0.947  &  0.020        \\
  54 &1& 0.947  &  0.019        \\
  56 &1 & 0.947 &   0.018 \\
  64 &0.986 & 0.936  &  0.016 \\
  68 &1& 0.946  &  0.015 \\
  74 & 1 &0.946 &   0.014\\
  80 & 0.977 &0.930  &  0.013\\
  84 & 0.978 & 0.931 &   0.012\\
  86 &1& 0.946   & 0.012\\
  90 &1 &0.945 &   0.011\\
  92 &0.980 &0.932  &  0.011\\
  98 &1 &0.945  &  0.010\\\hline
\end{tabular}
\end{center}
\end{minipage}
\end{table}

\section{Optimal QS design of run size \lowercase{$n=km$}}\label{sec4}

In this section, we present a construction method for optimal QS designs with run sizes $n=km$ for any $k \ge 2$, based on the constructions for $n=m$ in Section~\ref{sec3}.
The resulting designs are flexible to meet with various practical needs in one-shot experiments.

\begin{lemma}\label{lem:MCD}
   A QS design $D=(X,O)$ with run size $n=km$ for any $k \ge 2$ has the marginally coupled structure if and only if $O$ can be expressed as $O = \left(O_1^\T,O_2^\T,\ldots,O_k^\T\right)^\T$ up to some row permutations, where each $O_i$ is a $m\times m$ Latin square for $i=1,\ldots,k$.
\end{lemma}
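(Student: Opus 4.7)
My plan is to prove the two directions of the ``iff'' using a natural $m$-sized slicing of the level set $\{1,\ldots,km\}$ into consecutive blocks $S_i = \{(i-1)m+1,\ldots,im\}$ for $i=1,\ldots,k$. Throughout, I adopt the standard sliced-LHD interpretation (inherited from \cite{Deng2015} and \cite{qian2012sliced}'s sliced LHDs): ``a $k\times m$ sub-$X$ forms an LHD'' means that after the collapsing $x\mapsto \lceil x/m\rceil$, each column of the sub-$X$ is a permutation of $\{1,\ldots,k\}$; equivalently, each column of the sub-$X$ contains exactly one value from each slice $S_i$.

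For the forward direction, starting from a marginally coupled $(X,O)$, I would partition the rows by the slice containing the first-column $X$-value, i.e., $B_i=\{r:X_{r,1}\in S_i\}$. Since column 1 of $X$ is a permutation of $\{1,\ldots,km\}$, $|B_i|=m$. For any column $l$ of $O$ and level $j\in\{1,\ldots,m\}$, the $k$ rows $\{r:O_{rl}=j\}$ have, by the sub-LHD property applied to the first $X$-column, one representative in each slice $S_i$; hence exactly one such row lies in each $B_i$. It follows that column $l$ of $O$ restricted to $B_i$ is a permutation of $\{1,\ldots,m\}$, and combined with each row of $O$ already being such a permutation (by the definition of a sequence design), $O|_{B_i}$ is an $m\times m$ Latin square. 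Reordering the rows of $O$ to list $B_1,B_2,\ldots,B_k$ in order then yields the required block form.

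For the backward direction, given $O=(O_1^\T,\ldots,O_k^\T)^\T$ with each $O_i$ a Latin square, I would construct a witness $X=(X_1^\T,\ldots,X_k^\T)^\T$ where each $X_i$ is any $m\times m$ LHD whose entries have been shifted to occupy slice $S_i$. Then each column of $X$ is a permutation of $\{1,\ldots,km\}$, so $X$ is a valid $km\times m$ LHD. For any column $l$ of $O$ and level $j$, the Latin-square structure of each $O_i$ picks out exactly one row per block with $O_{rl}=j$, and the corresponding $X$-values lie one in each slice $S_i$ across every column, so the sub-$X$ collapses to a $k\times m$ LHD. The main subtle point of the whole argument is pinning down the sliced-LHD interpretation of ``sub-$X$ is an LHD''; once that is fixed, both directions reduce to bookkeeping of how the first $X$-column distributes the $km$ rows across the slices $S_i$.
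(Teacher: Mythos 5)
Your proof is correct and follows essentially the same route as the paper's: the forward direction partitions the rows according to which length-$m$ slice the first $X$-column value falls in (the paper does this by sorting that column) and reads off the Latin-square property of each block from the collapsed sub-LHD condition, while the backward direction exhibits a witness $X$ in which each block's columns occupy a single slice. The only cosmetic difference is that the paper's witness permutes the slice assignment independently per column, whereas you fix it to the identity; both are valid.
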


By Lemma~\ref{lem:MCD}, for any marginally coupled design $D=(X,O)$, we can represent it as
   \begin{equation}
    \label{eq:XOpartition}
   D = (X,O) = \left(
   \begin{array}{cc}
     X_1 & O_1 \\
     \vdots & \vdots \\
     X_k & O_k
   \end{array}
   \right),
   \end{equation}
where each $O_i$ is an $m\times m$ Latin square. This motivates us to propose the following construction method.

We first present a three-step construction of optimal QS designs $D=(X,O)$ for $k= 2, \ldots, m+1$, based on the construction of $\tilde E_{b}$ in Section~\ref{constructnm}.
\setcounter{step}{0}
\begin{step}
Among all possible designs $\tilde E_{b}$ with $b \in \mathcal{Z}_p$, let $\tilde E_{b_1}$, \ldots, $\tilde E_{b_k}$ be the designs with the $k$ smallest $r_{ave}$ values. Obtain designs $O_1,\ldots, O_k$ by randomly permuting the rows of $\tilde E_{b_1}, \ldots, \tilde E_{b_k}$. Obtain the $km \times m$ sequence design by $O = (O_1^\T,\ldots,O_k^\T)^\T$.
\end{step}
\begin{step}
Let the matrix $F = (F_1^\T,\ldots,F_k^\T)^\T$ where each $F_i$, $i=1,\ldots,k$, is an $m \times m$ matrix obtained by randomly permuting the columns of $F_0$. Here $F_0=\tilde E_{b^*_2}$ with $b^*_2$ defined in Theorem~\ref{theo:XO_m} if $2m+1$ is not an odd prime and $F_0=E$ defined in Corollary~\ref{theo:XO_m_cor} if $2m+1$ is an odd prime.
\end{step}
\begin{step}
For $j=1,\ldots,m$, let $f_j$ be the $j$th column of $F$ and let $\ell_j$ be a random permutation of $(0,1,\ldots,k-1)^\T$. Let the $j$th column of the quantitative design $X$ be $m \ell_j \otimes 1_m +  f_j $, where $1_m$ is a vector of $m$ ones and $\otimes$ is the Kronecker product. Obtain the $km \times 2m$ optimal QS design $D=(X,O)$.
\end{step}

\begin{theorem}\label{theo:XOkm}
The QS design $D=(X,O)$ with run size $n=km$, $2\leq k \leq m+1$, constructed by this three-step method has the following properties: \\
(i) $O$ is a pair-balanced design, where all $t_{i,j}=k$ for $1\leq i \neq j \leq m$; \\
(ii) $O$ is an asymptotically maximin Hamming distance design:
$$d_H(O) \geq m-3 = (1-\mathcal{O}(1/m)) d_{H,upper} \rightarrow d_{H,upper} \text{ \ as \ } m\rightarrow \infty,$$
where the upper-bound $d_{H,upper}$ is defined in Lemma \ref{hupper}; \\
(iii) $O$ is an asymptotically orthogonal design with
$$r_{ave}(O) < \mathcal{O}(1/m) \rightarrow 0 \text{ \ as \ } m\rightarrow \infty;$$
(iv) $X$ has desirable space-filling properties with
$$d_1(X)\geq (1-\mathcal{O}(1/m))  \lfloor (m+1)m /3\rfloor,$$
$$d_2(X) \geq \sqrt{{2}/{3}} (1-\mathcal{O}(1/m))  \sqrt{ \lfloor  m^2(m+1)/6 \rfloor }; $$
(v) $D=(X,O)$ is a marginally coupled design.
\end{theorem}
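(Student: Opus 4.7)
The plan is to deduce the five assertions of Theorem~\ref{theo:XOkm} by propagating the single-block conclusions of Theorem~\ref{lem:balance}, Theorem~\ref{theo:XO_m} and Corollary~\ref{theo:XO_m_cor} through the $k$-block stack, and invoking Lemma~\ref{lem:MCD} for the marginal-coupling structure.

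Parts (v) and (i) are essentially bookkeeping. Each block $O_i$ is a row permutation of $\tilde E_{b_i}$, hence a Latin square by Theorem~\ref{lem:balance}(i), so Lemma~\ref{lem:MCD} gives (v); row permutations preserve each row's adjacent pairs, so Theorem~\ref{lem:balance}(ii) yields $t_{i,j}(O)=\sum_\ell t_{i,j}(O_\ell)=k$, which is (i).

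Part (ii) is the most delicate step. Within-block pairs have Hamming distance $m$ by Theorem~\ref{lem:balance}(iii), so the work is bounding coincidences between two rows drawn from $\tilde E_{b_1}$ and $\tilde E_{b_2}$ with $b_1\neq b_2$. Denote by $\pi_b:\mathcal{Z}_p\setminus\{W(b)\}\to\{1,\ldots,m\}$ the order-preserving bijection used to relabel each leave-one-out design. A careful case split shows $\pi_{b_1}(v_1)=\pi_{b_2}(v_2)$ forces either $v_1=v_2$ or $|v_1-v_2|=1$. In the first case the good-lattice-point identity $(i-j)h_k\equiv b_2-b_1\pmod p$ with the $h_k$ distinct mod $p$ permits at most one column of agreement. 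In the second, unpacking $W$ in \eqref{wt} shows that consecutive values in $\mathcal{Z}_p$ have preimages summing to $p-1$ or $p$, yielding the congruence $(i+j)h_k+(b_1+b_2)\equiv -1\text{ or }0\pmod p$ with at most two solutions in $h_k$ whenever $i+j\not\equiv 0\pmod p$. Combining, at most three columns can coincide, so $d_H(O)\ge m-3$.

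Parts (iii) and (iv) rely on structural identities for the stack. For (iii), because each block has the same column mean and variance, the column correlation decomposes as $r_{O}(j_1,j_2)=k^{-1}\sum_{i}r_{O_i}(j_1,j_2)$, and row permutations preserve these correlations, so $r_{ave}(O)\le\max_i r_{ave}(\tilde E_{b_i})$. Since Step~1 selects the $k$ indices with smallest $r_{ave}$ and the argument of Theorem~\ref{theo:XO_m}(i)---reinforced by Proposition~\ref{prop:twob}, which shows the $r_{ave}$ values come in equal pairs and cluster tightly---yields $r_{ave}(\tilde E_b)=\mathcal{O}(1/m)$ uniformly in $b$, this maximum is $\mathcal{O}(1/m)$. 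For (iv), within-block $L_q$-distances equal row distances in $F_0$ and are bounded by Theorem~\ref{theo:XO_m}(ii) or Corollary~\ref{theo:XO_m_cor}. For a between-block pair the column difference is $m(\ell_j[i_1]-\ell_j[i_2])+b_j$ with $|\ell_j[i_1]-\ell_j[i_2]|\ge 1$ and $|b_j|<m$, so the $m$-shift dominates term by term; writing $\epsilon_j$ for the sign of $\ell_j[i_1]-\ell_j[i_2]$, the between-block $L_1$-distance equals $m\sum_j|\ell_j[i_1]-\ell_j[i_2]|+\sum_j\epsilon_j b_j$. Since $F_0$ has constant row sum $m(m+1)/2$ under both choices, $\sum_j b_j=0$, and the rearrangement inequality bounds $|\sum_j\epsilon_j b_j|\le m^2/2$. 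Hence the between-block $L_1$-distance is at least $m^2-m^2/2\ge m(m+1)/3$, so the minimum is attained within-block; an analogous second-moment expansion handles $L_2$. The main obstacle throughout is the relabeling enumeration in (ii), since correctly translating $\pi_{b_1}(v_1)=\pi_{b_2}(v_2)$ into solvable congruences over the $h_k$'s---and controlling the degenerate case $i+j\equiv 0\pmod p$---is the only step that is not a direct averaging identity or an invocation of an already-established single-block result.
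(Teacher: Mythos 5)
Your parts (i), (iii) and (v) coincide with the paper's argument (block-wise counting of adjacent pairs, the averaging identity $r_{uv}(O)=k^{-1}\sum_i r_{uv}(O_i)$ combined with the uniform bound $r_{ave}(\tilde E_b)=\mathcal{O}(1/m)$ for every $b$, and Lemma~\ref{lem:MCD}), and your part (iv) is a mild variant of the paper's between-block estimate $d_q\bigl(x^{(i)}_{\ell},x^{(i')}_{\ell'}\bigr)\geq d_q(x_{\ell},x_{\ell'}+m)$, arriving at the same conclusion by the same mechanism (the $m$-shift dominates, constant row sums kill the cross term).

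The genuine gap is in part (ii), and it is exactly the case you name but do not resolve. Your count --- at most one coinciding column from $v_1=v_2$ via $(i-j)h_k\equiv b_2-b_1\pmod p$, and at most two from $|v_1-v_2|=1$ via $(i+j)h_k+(b_1+b_2)\equiv 0$ or $-1\pmod p$ --- is valid only when $i+j\not\equiv 0\pmod p$. When $i+j\equiv 0\pmod p$ (the paper's sub-case $j'=p-1$) and additionally $b_1+b_2\equiv 0$ or $-1\pmod p$, the second congruence holds identically in $h_k$, so \emph{all} $m$ columns satisfy your necessary condition $|v_1-v_2|\le 1$, and the counting argument gives no bound whatsoever. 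This configuration cannot be excluded: Step~1 selects the $k$ smallest-$r_{ave}$ shifts, and for $k$ close to $m+1$ every pair $(b_1,b_2)$ occurs, including pairs with $b_1+b_2\equiv 0$ or $-1\pmod p$. Closing this case requires going beyond the necessary condition $|v_1-v_2|\le 1$ and analyzing when the two relabeling maps actually collide, i.e.\ the interval conditions of the form $W(b_2)-1<W(x)<W(b_1)$ together with $W(x)=W(j'x+b')\mp 1$; the paper's Case~1(b) shows these are simultaneously satisfiable for only one value of $x$ (namely $x=b$), yielding $d_H(\tilde e_i,\tilde e_j)\ge m-2$ there. Without that finer analysis your assertion ``at most three columns can coincide'' is unproved precisely in the sub-case where the congruence count degenerates, so the claimed bound $d_H(O)\geq m-3$ does not yet follow from your argument.
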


\begin{example}\label{ex0}
Consider the construction of a QS design with $n=12$ and $m=4$, where $p=m+1=5$ and $k=n/m=3$. The 3-steps are illustrated as below.
	
	Step~1. From Table~\ref{tab:b_and_b1}, all the following three designs have the same smallest $r_{ave}$ value of $0.333$ among all $\tilde E_b$ with $b \in \mathcal{Z}_5$ when $p=5$. Let $O_1= \tilde E_{1}$, $O_2= \tilde E_{3}$, $O_3= \tilde E_{4}$ and $O = (O_1^\T,O_2^\T,O_3^\T)^\T$, where
$$
	\tilde E_{1} =   \begin{pmatrix}
		4 & 3 & 2 & 1 \\
		3 & 1 & 4 & 2 \\
		2 & 4 & 1 & 3 \\
		1 & 2 & 3 & 4 \\
		\end{pmatrix},
	\tilde E_{3} = \begin{pmatrix}
		2 & 1 & 3 & 4 \\
		1 & 4 & 2 & 3 \\
		3 & 2 & 4 & 1 \\
		4 & 3 & 1 & 2 \\
		\end{pmatrix},
	\tilde E_{4} =  \begin{pmatrix}
		1 & 2 & 4 & 3 \\
		2 & 3 & 1 & 4 \\
		4 & 1 & 3 & 2 \\
		3 & 4 & 2 & 1 \\
		\end{pmatrix}.
	$$

\medskip\medskip
	
	Step~2. By Table~\ref{tab:b_and_b1}, WLOG, we select $b^*_2=3$.	Let $F = (F_1^\T,F_2^\T,F_3^\T)^\T$, where each $F_i$ is obtained by randomly permuting the columns of $F_0=\tilde E_{3}$:	
	$$F_1 =  \begin{pmatrix}
		3 & 4 & 1 & 2 \\
		2 & 3 & 4 & 1 \\
		4 & 1 & 2 & 3 \\
		1 & 2 & 3 & 4 \\
		\end{pmatrix},
	F_2 =  \begin{pmatrix}
		4 & 3 & 2 & 1 \\
		3 & 2 & 1 & 4 \\
		1 & 4 & 3 & 2 \\
		2 & 1 & 4 & 3 \\
		\end{pmatrix},
	F_3 =  \begin{pmatrix}
		4 & 2 & 1 & 3 \\
		3 & 1 & 4 & 2 \\
		1 & 3 & 2 & 4 \\
		2 & 4 & 3 & 1 \\
		\end{pmatrix}. $$
\medskip\medskip
	
  Step~3. For $j=1,\ldots,4$, let
		$\ell_1 = (0,2,1)^\T$, $\ell_2 = (2,0,1)^\T$, $\ell_3 =(1,0,2)^\T$ and $\ell_4 =(0,2,1)^\T$
		be four random permutations of $(0,1,2)^\T$. Obtain the $j$th column of $X$ as $4 \ell_j \otimes 1_4 +  f_j $, where $f_j$ is the $j$th column of $F$. Finally, we obtain the design $D=(X,O)$ in Table~\ref{tab:exam2}.

\begin{table}[htbp]
\begin{minipage}{\linewidth}
\caption{A QS design with $n=12$ and $m=4$.
\label{tab:exam2}}%
\begin{center}
\begin{tabular}{cccc c cccc}
\multicolumn{4}{c}{$X$}  && \multicolumn{4}{c}{$O$}   \\
				3 & 12 & 5 & 2 && 4 & 3 & 2 & 1 \\
				{\bf 2} & 11 & 8 & 1 && 3 & {\bf 1} & 4 & 2 \\
				4 & 9  &  6 & 3 && 2 & 4 & 1 & 3 \\
				1 & 10 & 7 & 4 && 1 & 2 & 3 & 4 \\ \hdashline[2pt/2pt]
				{\bf 12} & 3 & 2 & 9 && 2 & {\bf 1} & 3 & 4 \\
				11 & 2 & 1 & 12 && 1 & 4 & 2 & 3 \\
				9 & 4 & 3 & 10 && 3 & 2 & 4 & 1 \\
				10 & 1 & 4 & 11 && 4 & 3 & 1 & 2 \\ \hdashline[2pt/2pt]
				8 & 6 & 9 & 7 && 1 & 2 & 4 & 3 \\
				7 & 5 & 12 & 6 && 2 & 3 & 1 & 4 \\
				{\bf 5} & 7 & 10 & 8 && 4 & {\bf 1} & 3 & 2 \\
				6 & 8 & 11 & 5 && 3 & 4 & 2 & 1 \\
\end{tabular}
\end{center}
\end{minipage}
\end{table}

This design $D$ in Table~\ref{tab:exam2} has a marginally coupled structure. As an illustration, when component 1 (i.e. $c_1$) is in the second order (i.e. the second column of $O$), its quantitative levels (i.e. the first column of $X$) are $2,12$ and $5$, which are equally spaced if the $12$ levels of $X$ are collapsed to $3$ levels. This also holds for the quantitative levels corresponding to the other three components ($c_2$, $c_3$ and $c_4$).
\end{example}

For certain design sizes, in addition to the properties shown in Theorem \ref{theo:XOkm}, the constructed designs can be further improved. Here, we illustrate the case of $k=2$.

\begin{lemma}\label{lemma:MCD2}
For any marginally coupled design $D=(X,O)$ with run size $n=2m$, if $d_H(O) \geq m-2$ and the number of run pairs having the Hamming distance of $m-2$ is smaller than $m^2/2$, the $L_1$- and $L_2$-distances of the quantitative design $X$ have the upper-bounds:
  	$$
 	d_1(X) \leq d_{1,upper} = \lfloor (m+1)m /3\rfloor,
  $$
  $$
    d_2(X) \leq d_{2,upper}  = \sqrt{ \lfloor  m^2(m+1)/6 \rfloor }.
  $$
\end{lemma}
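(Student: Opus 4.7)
The plan is to show that the hypotheses on $O$ force a strong block structure on $X$, namely that in each column the block-$1$ entries (rows $1,\ldots,m$) are a permutation of either $\{1,\ldots,m\}$ or $\{m+1,\ldots,2m\}$; the stated bounds then follow by restricting $X$ to block $1$ and invoking Lemma~\ref{dupper} on the resulting $m\times m$ LHD.

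First I will translate the MCD condition into a combinatorial relation on $X$. By Lemma~\ref{lem:MCD}, write $O=(O_1^\T,O_2^\T)^\T$ with $O_1,O_2$ Latin squares whose row sets are block $1$ and block $2$. Define $h:\{1,\ldots,2m\}\to\{0,1\}$ by $h(v)=1$ if $v\leq m$, extended coordinatewise to rows. MCD says that for each pair $(j,\ell)$, the two rows $r_1,r_2$ (one per block) with $O_{r_1,j}=O_{r_2,j}=\ell$ satisfy $h(x_{r_1,\cdot})+h(x_{r_2,\cdot})=1_m$. Let $H$ be the bipartite graph on (block $1$, block $2$) whose edges are these MCD-paired rows, and for $r_1\in$ block $1$ let $S(r_1)$ denote its neighborhood. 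I next argue that $H$ is connected; a chain of complementarities then forces all block-$1$ rows to share a common $h$-vector $h^\ast$, which gives the claimed block structure immediately. For connectivity: since $d_H(O)\geq m-2$ bounds the number of coordinatewise agreements of every pair by $2$, while each row in block $1$ has exactly $m$ total agreements against block $2$, we get $|S(r_1)|\geq m/2$. Suppose $H$ had $k\geq 2$ components; matching of agreement counts on the two sides of each component forces equal block-$1$ and block-$2$ sizes $m_i$, and $|S(r_1)|\leq m_i$ then gives $m_i\geq m/2$. Since $\sum m_i=m$, this forces $k=2$ with $m_1=m_2=m/2$. Summing the per-row inequality $t_2(r_1)\geq m-m_i$ over the block-$1$ part of each component yields $n_2\geq\sum_i m_i(m-m_i)=m^2/2$, contradicting the hypothesis. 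Hence $H$ is connected, so $h^\ast$ exists, and each column of $X$ restricted to block $1$ is a permutation of $\{1,\ldots,m\}$ (if $h^\ast_k=1$) or of $\{m+1,\ldots,2m\}$ (if $h^\ast_k=0$).

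With block structure in hand, subtract $m$ from each block-$1$ entry in every column $k$ with $h^\ast_k=0$ to obtain an $m\times m$ LHD $Y$ with levels in $\{1,\ldots,m\}$; since row-wise $L_1$ and $L_2$ distances are preserved by this shift, $d_1(X)\leq d_1(Y)$ and $d_2(X)\leq d_2(Y)$, and Lemma~\ref{dupper} applied to $Y$ yields $d_1(Y)\leq\lfloor m(m+1)/3\rfloor$ and $d_2(Y)\leq\sqrt{\lfloor m^2(m+1)/6\rfloor}$. The main obstacle is the connectivity step for $H$: one must combine the Hamming-derived bound $|S(r_1)|\geq m/2$ with the within-component count $n_{2,i}\geq m_i(m-m_i)$ to convert the quantitative hypothesis $n_2<m^2/2$ into the topological statement that $H$ is connected, and both ingredients depend essentially on the joint conditions of the lemma.
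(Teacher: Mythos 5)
Your proof is correct, and it arrives at the same two-stage structure as the paper's argument---first show that the marginal coupling plus the two hypotheses on $O$ force every column of $X$, restricted to one $m$-row block, to be a permutation of $\{1,\ldots,m\}$ or of $\{m+1,\ldots,2m\}$, then apply Lemma~\ref{dupper} to the resulting $m\times m$ LHD---but the route to the block structure is organized differently. The paper works column by column on the collapsed design: assuming some column of $X$ splits block~$1$ as $(0_a^\T,1_{m-a}^\T)$ with $0<a<m$, it shows the corresponding $a$ rows of $O_1$ and $a$ rows of $O_2$ share the same level set in every column, so the $a^2$ cross-block Hamming distances sum to $ma(a-1)$ and average $(1-1/a)m$; together with $d_H(O)\ge m-2$ this forces $a=m-a=m/2$, whereupon all $m^2/4+m^2/4=m^2/2$ of those cross pairs sit at distance exactly $m-2$, contradicting the second hypothesis. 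You instead introduce the bipartite agreement graph $H$ and prove connectivity, which delivers the common $h$-vector (hence the block structure for all columns simultaneously). The two contradictions are the same count viewed at different resolutions: your per-row identity---each block-$1$ row has exactly $m$ coordinate agreements against block~$2$ because $O_2$ is a Latin square, each neighbour absorbs at most $2$ of them since $d_H(O)\ge m-2$, so $|S(r_1)|\ge m/2$ and $t_2(r_1)=m-|S(r_1)|\ge m-m_i$---aggregates to exactly the paper's averaging bound, and both arguments spend the hypothesis $n_2<m^2/2$ at the same balanced extremal configuration. Your packaging is arguably cleaner in that it avoids the WLOG reductions and the explicit exhibition of the split Latin-square configuration; the paper's version, in exchange, makes that forbidden configuration concrete. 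All the supporting steps in your write-up (the component sizes being equal on both sides, $|S(r_1)|\le m_i$, the distance-preserving shift from $X$ restricted to block~$1$ down to an $m$-level LHD $Y$ with $d_q(X)\le d_q(Y)$) check out.
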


As we seek space-filling sequence designs $O$ under the Hamming distance, the conditions of Lemma~\ref{lemma:MCD2} always hold. As shown in the following Proposition~\ref{prop:MCD2}, when $n=2m$, the above three-step construction can be simplified and the resulting designs will have improved space-filling properties.

\begin{proposition}\label{prop:MCD2}
When $n=2m$, let the sequence design $O=(O_1^\T,O_2^\T)^\T$ where $O_1 = \tilde E_{b_1^*}$ with $b^*_1$ determined by \eqref{b1_star} and $O_2 = \tilde E_{p-b_1^*}$. Follow Steps 2 and 3 to construct the quantitative design $X$. The resulting marginally coupled design $D=(X,O)$ has the following properties: \\
(i) $O$ has improved minimum pairwise Hamming distance: $d_H(O) = m-2$;
\\
(ii) $X$ is an asymptotically maximin $L_1$-distance design which also has large $L_2$-distance:
  $$d_1(X) \geq (1-\mathcal{O}(1/m)) d_{1,upper} \rightarrow d_{1,upper},$$
  $$d_2(X) \geq  \sqrt{{2}/{3}} (1-\mathcal{O}(1/m))  d_{2,upper} \rightarrow 0.817 d_{2,upper},$$
  where $m\rightarrow \infty$ and the upper-bounds $d_{1,upper}$ and $d_{2,upper}$ are defined in Lemma~\ref{lemma:MCD2}.
\end{proposition}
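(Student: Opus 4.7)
The plan is to split the proof into two parts. Claim (ii) follows from Theorem~\ref{theo:XOkm}(iv): the quantitative design $X$ here is built through the same Steps~2 and~3, and the bounds on $d_1(X)$ and $d_2(X)$ established there depend only on those two steps (through the properties of $F_0$ supplied by Theorem~\ref{theo:XO_m} or Corollary~\ref{theo:XO_m_cor}), not on how $O$ is chosen in Step~1. So (ii) transfers verbatim, with $d_{1,upper}, d_{2,upper}$ now read from Lemma~\ref{lemma:MCD2}. For claim (i), Theorem~\ref{lem:balance} already delivers $d_H(O_1) = d_H(O_2) = m$, so only the inter-block Hamming distance needs attention. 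I would parameterize the pre-relabeling entries as $E_{b_1^*}[i,k] = W(ki + b_1^* \bmod p)$ and $E_{p-b_1^*}[j,k] = W(kj - b_1^* \bmod p)$, and observe via the bijectivity of $W$ that they coincide iff $k(i-j) \equiv -2b_1^* \pmod{p}$. Since $b_1^* \not\equiv 0 \pmod{p}$ and $\gcd(k,p)=1$, this congruence admits a unique solution $k^A \in \{1,\dots,m\}$ when $i \neq j$ and none when $i = j$, producing at most one ``congruence match.''

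A second, more subtle source of matching comes from the leave-one-out relabelings $\pi_1$ and $\pi_2$, which delete $W(b_1^*)$ and $W(p-b_1^*)$ respectively. The explicit form of the Williams transformation forces $|W(b_1^*) - W(p-b_1^*)| = 1$, so $\pi_1$ and $\pi_2$ agree on their common domain and send the two exceptional values to the same relabeled level. A ``relabeling collision'' occurs at $(i,j,k)$ precisely when $E_{b_1^*}[i,k] = W(p-b_1^*)$ and $E_{p-b_1^*}[j,k] = W(b_1^*)$ in the same column, which translates to $ki \equiv -2b_1^* \pmod{p}$ and $kj \equiv 2b_1^* \pmod{p}$; eliminating $k$ forces $j \equiv -i \pmod{p}$ and pins the collision column to $k^B = -2b_1^* i^{-1} \bmod p$. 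Thus only the pairs $(i,p-i)$ are affected, and at a single column. For such a pair the congruence match sits at $k^A = -b_1^* i^{-1} \bmod p$, which differs from $k^B$ because $b_1^* \not\equiv 2 b_1^* \pmod{p}$. Hence the pair $(i, p-i)$ contributes exactly two coincidences, every other inter-block pair at most one, and intra-block pairs zero, yielding $d_H(O) = m - 2$.

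The main obstacle is this two-source bookkeeping: one must identify the congruence and relabeling matches independently, verify their columns $k^A$ and $k^B$ are distinct, and ensure neither source is silently removed or double-counted in the remaining configurations. Two auxiliary checks are needed: (a) the congruence match is never cancelled by a relabeling mismatch at column $k^A$, which follows by showing the matched value $W(k^A i + b_1^*)$ cannot equal either exceptional value unless $j \equiv 0 \pmod{p}$; and (b) no pair other than $(i, p-i)$ triggers a relabeling collision, which is immediate from the $j \equiv -i$ derivation above. A minor case split on whether $b_1^* < p/2$ or $b_1^* > p/2$ pins down the explicit forms of $\pi_1$ and $\pi_2$ but does not affect the argument's structure.
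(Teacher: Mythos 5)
Your proof is correct and follows essentially the same route as the paper's: intra-block pairs are at distance $m$, inter-block coincidences split into the pre-relabeling congruence match $k(i-j)\equiv -2b_1^*\ (\mathrm{mod}\ p)$ plus the one extra collision created by the leave-one-out maps $\pi_1,\pi_2$ (occurring only for $j=p-i$, at a column distinct from the congruence match), and part (ii) is inherited from Theorem~\ref{theo:XOkm}(iv) together with Lemma~\ref{lemma:MCD2}. The only point worth making explicit is that invoking the upper bounds of Lemma~\ref{lemma:MCD2} requires verifying its hypothesis---that fewer than $m^2/2$ run pairs attain Hamming distance $m-2$---which your part (i) already supplies, since exactly the $m$ pairs $(i,p-i)$ do so.
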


\begin{example}\label{ex1}
	In Table~\ref{tab:exam}, we show the QS design $D = (X,O)$ with $n=12$ runs and $m=6$ components constructed according to Proposition~\ref{prop:MCD2} where $b^*_1=1$ in Table~\ref{tab:b_and_b1} is used. As $2m+1=13$ is an odd prime, in Step~2, $F_0=E$ with $E$ the $L_1$-equidistant LHD defined in Corollary~\ref{theo:XO_m_cor}.
The design $O$ is pair-balanced with all $t_{i,j} = 2$ for $1\leq i \neq j \leq 6$, and has $d_H(O)=4$. The design $X$ is an LHD with $d_1(X) = 14$ and $d_2(X) = \sqrt{40}$. 
Here, $d_1(X)$ achieve the upper-bound in Lemma~\ref{lemma:MCD2}, and thus $X$ is a maximin $L_1$-distance design.
The design $D = (X,O)$ is marginally coupled, as shown by the dashed line in Table~\ref{tab:exam}.
\vspace{-0.1in}
\begin{table}[htbp]
\begin{minipage}{\linewidth}
\caption{A QS design with $n=12$ and $m=6$.
\label{tab:exam}}%
\begin{center}
\begin{tabular}{cccccc c cccccc}
\multicolumn{6}{c}{$X$}  && \multicolumn{6}{c}{$O$}   \\
				3 & 6 & 2 & 10 & 5 & 1 &&  4 & 6 & 5 & 3 & 2 & 1 \\
				6 & 1 & 4 & 11 & 3 & 2 && 6 & 3 & 1 & 4 & 5 & 2 \\
				4 & 5 & 6 & 7 & 2 & 3 && 5 & 1 & 6 & 2 & 4 & 3 \\
				1 & 2 & 5 & 9 & 6 & 4 && 3 & 4 & 2 & 6 & 1 & 5 \\
				2 & 4 & 3 & 12 & 1 & 5 && 2 & 5 & 4 & 1 & 3 & 6 \\
				5 & 3 & 1 & 8 & 4 & 6 && 1 & 2 & 3 & 5 & 6 & 4 \\ \hdashline[2pt/2pt] 
				11 & 10 & 8 & 6 & 7 & 9 && 1 & 2 & 4 & 6 & 5 & 3 \\
				9 & 11 & 10 & 1 & 8 & 12 && 2 & 6 & 3 & 1 & 4 & 5 \\
				8 & 7 & 12 & 5 & 9 & 10 && 4 & 3 & 2 & 5 & 1 & 6 \\
				12 & 9 & 11 & 2 & 10 & 7 && 6 & 1 & 5 & 2 & 3 & 4 \\
				7 & 12 & 9 & 4 & 11 & 8 && 5 & 4 & 1 & 3 & 6 & 2 \\
				10 & 8 & 7 & 3 & 12 & 11 && 3 & 5 & 6 & 4 & 2 & 1 \\
\end{tabular}
\end{center}
\end{minipage}
\end{table}
\end{example}
\vspace{-0.3in}

For general values of $m$ and $k\ge 2$,  we can use a modification of Steps~1--3 to obtain QS designs with a marginally coupled structure.
Suppose that we have two $m\times m$ Latin squares $X_0$ and $O_0$, where $X_0$ is a quantitative design and $O_0$ is a pair-balanced sequence design. We modify Steps~1--2  as follows.

{\bf Step~1'} Let $O_{init} = (O_1^T, \ldots, O_k^T)$ be a $km\times m$ sequence design, where $O_i, i=1,\ldots,k$ are obtained by
randomly permuting the levels in $O_0$. Optimize $O_{init}$ using Algorithm~\ref{alg:searchOk} in the appendix to obtain the sequence design $O$.

{\bf Step~2'} Let the matrix $F = (F_1^\T,\ldots,F_k^\T)^\T$ where each $F_i$, $i=1,\ldots,k$, is an $m \times m$ matrix obtained by randomly permuting the columns of $X_0$.

Proceeding with Step~3, we can obtain the desired $km \times 2m$ QS design $D=(X,O)$
which by Lemma~\ref{lem:MCD} is a marginally coupled design.
The sequence design $O$ is a pair-balanced design with all $t_{i,j}=k$ for $1\leq i \neq j \leq m$.

In Step~1', we need to optimize the initial sequence design $O_{init} = (O_1^T, \ldots, O_k^T)$ under the maximin Hamming distance and orthogonality criteria simultaneously.
We use the level permutation method to minimize $w r_{ave}(O) + (1-w) (1-d_H(O)/d_{H,upper})$, that is, a weighted criterion, where $w\in [0,1]$ is a user-specified weight.
Since $O_{0}$ is pair-balanced, the design $O_{init}$ will be pair-balanced after level permutations.
By Lemma~\ref{hupper}, $d_{H,upper}=m-1$ for $k\ge 2$. Because both $r_{ave}(O)$ and $d_H(O)/d_{H,upper}$ are $0$-$1$ scaled,
we simply take $w=0.5$. A threshold accepting algorithm is described in Algorithm~\ref{alg:searchOk} in the appendix.

Specifically, suppose $m$ is an even number and there exists an integer $N$ such that $m=\phi(N)/2$. Then two Latin squares $X_0$ and $O_0$ in Step~1' and~2' can be taken as the designs constructed in Section~\ref{sec3.2}, that is, $X_0 = L$ defined in \eqref{eq:L} after replacing each $h_i$ in $L$ with $i$ for $i=1,\ldots,m$, and $O_0=M$ where $M$ is the Williams Latin square.

\begin{proposition}\label{prop:other_n=km}
Suppose $m$ is an even number and there exists an integer $N$ such that $m=\phi(N)/2$.
The QS design $D=(X,O)$ with run size $n=km$, $k\ge 2$, constructed by Steps~1', 2' and~3 using  $X_0 = L$ and $O_0=M$ has the following properties: \\
(i) $O$ is a pair-balanced design, where all $t_{i,j}=k$ for $1\leq i \neq j \leq m$; \\
(ii) if one of the following condition holds: (a) $N=p$ or $2p$, $m=\phi(N)/2=(p-1)/2$, $p$ an odd prime; (b) $N=4p$, $m=\phi(N)/2 = p-1$, $p$ an odd prime; (c) $N=2^t$, $t\ge 3$, $m=\phi(N)/2 = 2^{t-2}$, then $X$ has desirable space-filling properties with
$d_1(X)\geq (1-\mathcal{O}(1/m))  \lfloor (m+1)m /3\rfloor$ and
$d_2(X) \geq \sqrt{{2}/{3}} (1-\mathcal{O}(1/m))  \sqrt{ \lfloor  m^2(m+1)/6 \rfloor }$. \\
(iii) $D=(X,O)$ is a marginally coupled design.
\end{proposition}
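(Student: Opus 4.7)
The plan is to separate the three assertions and verify them in order of difficulty. Parts (iii) and (i) are combinatorial consequences of the block decomposition $O = (O_1^\T, \ldots, O_k^\T)^\T$; part (ii) is the substantive one and reduces, via a within-block versus between-block case split, to the distance bound already established for $L$ in Theorem \ref{theo:XO_other_m}.

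For (iii), each $O_i$ in Step~1' is obtained from $O_0 = M$ by level permutation, which sends Latin squares to Latin squares; since $M$ is a Latin square by Lemma \ref{lem:william_square}, so is each $O_i$. Lemma \ref{lem:MCD} then yields the marginally coupled structure of $D = (X, O)$. For (i), Lemma \ref{lem:william_square} says $M$ is pair-balanced, and since relabeling components preserves the count of every ordered adjacent pair, each $O_i$ satisfies $t_{i,j}^{(O_i)} = 1$ for all $i \neq j$. Stacking the $k$ blocks gives $t_{i,j} = k$ as claimed. Note that Algorithm~\ref{alg:searchOk}, which optimizes $O$ under the weighted Hamming-distance plus orthogonality criterion, only acts by level permutations on the blocks and therefore preserves all of these properties.

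For (ii), I would separate pairwise distances in $X$ into within-block and between-block. Because each $F_s$ is a column permutation of $X_0 = L$, pairwise $L_q$-distances between two rows in the same block coincide exactly with those in $L$, and Theorem \ref{theo:XO_other_m}(ii) under any of conditions (a), (b), (c) yields the lower bounds $(1-\mathcal{O}(1/m))\lfloor (m+1)m/3\rfloor$ and $\sqrt{2/3}(1-\mathcal{O}(1/m))\sqrt{\lfloor m^2(m+1)/6\rfloor}$ for $q = 1, 2$ respectively. For a between-block pair $(s_1,r_1),(s_2,r_2)$ with $s_1 \neq s_2$, the column-$j$ entry difference equals $m\delta_j + \Delta f_j$, where $\delta_j = \ell_{j,s_1} - \ell_{j,s_2} \neq 0$ (since $\ell_j$ is a permutation of $\{0,\ldots,k-1\}$) and $|\Delta f_j| \leq m-1$. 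Hence $|m\delta_j + \Delta f_j| = m|\delta_j| + \mathrm{sgn}(\delta_j)\Delta f_j$, so the between-block $L_1$-distance equals $m\sum_j |\delta_j| + \sum_j \mathrm{sgn}(\delta_j)\Delta f_j$, with $\sum_j |\delta_j| \geq m$.

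The main obstacle is to show that this between-block quantity is no smaller than the within-block bound, so that the global minimum distance of $X$ is achieved within a block and equals $d_q(L)$. The crude bound $|\sum_j \mathrm{sgn}(\delta_j)\Delta f_j| \leq m(m-1)$ only gives a between-block distance of order $m$; the sharper argument uses the fact that every row of the Latin square $F_s$ has constant entry sum $m(m+1)/2$, which forces cancellations in $\sum_j \Delta f_j$ and hence an improved bound on the signed sum. This argument is parallel to the one needed for Theorem \ref{theo:XOkm}(iv) and, once completed, delivers a $\Theta(m^2)$ lower bound on between-block distances, exceeding the within-block estimate. Combining this with the distance bound on $L$ from Theorem \ref{theo:XO_other_m}(ii) finishes (ii) for both $L_1$ and $L_2$, and the proposition follows.
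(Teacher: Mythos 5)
Your overall assembly is the right one, and in fact the paper never writes out a proof of Proposition~\ref{prop:other_n=km} at all: it is the one result whose proof is omitted from the appendix, the intended argument being exactly the combination you describe --- Lemma~\ref{lem:william_square} plus invariance of pair-balance under level permutation for (i), the construction in Step~3 together with (the proof of) Lemma~\ref{lem:MCD} for (iii), and Theorem~\ref{theo:XO_other_m}(ii) combined with the within-block/between-block split from the proof of Theorem~\ref{theo:XOkm}(iv) for (ii). Parts (i) and (iii) of your write-up are complete and correct.

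The only place you are imprecise is the mechanism by which the between-block bound closes. The constant row sum of a Latin square controls $\sum_j \Delta f_j$ (it is $0$), but the quantity you actually need to bound is $\sum_j \mathrm{sgn}(\delta_j)\,\Delta f_j$, and the signs $\mathrm{sgn}(\delta_j)$ vary with $j$ (each column gets its own random permutation $\ell_j$), so no cancellation is forced; in the worst case this signed sum equals $-\sum_j|\Delta f_j|$. What saves the argument is that the two rows involved are both permutations of $\{1,\dots,m\}$, for which $\sum_j|\Delta f_j|\le \lfloor m^2/2\rfloor$ (this \emph{is} where the constant row sum enters, via $\sum_j|\Delta f_j| = 2\sum_{j:\Delta f_j>0}\Delta f_j \le 2\cdot m^2/4$). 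Hence the between-block $L_1$-distance is at least $m\sum_j|\delta_j| - \sum_j|\Delta f_j| \ge m^2 - \lfloor m^2/2\rfloor \ge m^2/2$, which already exceeds $d_{1,upper}=\lfloor m(m+1)/3\rfloor$ for $m\ge 4$ (and similarly for $L_2$ via the Cauchy--Schwarz step used in Theorem~\ref{theo:XO_m}(ii)), so the global minimum is attained within a block and equals $d_q(L)$, to which Theorem~\ref{theo:XO_other_m}(ii) applies. This is a one-line repair, not a gap in strategy; it also matches (and is arguably cleaner than) the inequality $d_q(x^{(i)}_{\ell},x^{(i')}_{\ell'})\ge d_q(x_{\ell},x_{\ell'}+m)$ asserted without detail in the paper's proof of Theorem~\ref{theo:XOkm}(iv).
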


\begin{example}\label{ex:16times8}
 Using Steps~1', 2' and~3, we obtain a marginally coupled QS design $D = (X,O)$ with $n=16$ runs and $m=8$ components, as shown in Table~\ref{tab:exam:16times8}. This design has $d_1(X) = 24$, $d_2(X) = \sqrt{90}$,  $d_H(O)=6$ and $r_{ave}(O) =0.143$. It can be checked that the condition in Lemma~\ref{lemma:MCD2} holds,  thus $d_1(X)$ attains the bound $d_{1,upper}=\lfloor (m+1)m /3\rfloor$.

 \begin{table*}[htbp]
\caption{A QS design with $n=16$ and $m=8$.
\label{tab:exam:16times8}}
\tabcolsep=0pt
\begin{tabular*}{\textwidth}{@{\extracolsep{\fill}}cccccccc c cccccccc @{\extracolsep{\fill}}}
\multicolumn{8}{c}{$X$}  && \multicolumn{8}{c}{$O$}   \\
7 & 8 & 3 & 6 & {10} & 4 & 5 & 1 &  & 3 & 5 & 7 & 1 & 8 & 6 & 2 & 4 \\
3 & 1 & 6 & 5 & {12} & 8 & 7 & 2 &  & 7 & 3 & 8 & 5 & 2 & 1 & 4 & 6 \\
4 & 7 & 8 & 1 & {14} & 5 & 2 & 3 &  & 8 & 7 & 2 & 3 & 4 & 5 & 6 & 1 \\
6 & 2 & 5 & 7 & {16} & 1 & 3 & 4 &  & 2 & 8 & 4 & 7 & 6 & 3 & 1 & 5 \\
1 & 6 & 2 & 4 & {15} & 3 & 8 & 5 &  & 4 & 2 & 6 & 8 & 1 & 7 & 5 & 3 \\
8 & 3 & 1 & 2 & {13} & 7 & 4 & 6 &  & 6 & 4 & 1 & 2 & 5 & 8 & 3 & 7 \\
2 & 5 & 4 & 8 & {11} & 6 & 1 & 7 &  & 1 & 6 & 5 & 4 & 3 & 2 & 7 & 8 \\
5 & 4 & 7 & 3 & {9} & 2 & 6 & 8 &  & 5 & 1 & 3 & 6 & 7 & 4 & 8 & 2 \\ \hdashline[2pt/2pt] 
{15} & {16} & {11} & {14} & 2 & {12} & {13} & {9} &  & 6 & 5 & 2 & 4 & 8 & 1 & 7 & 3 \\
{11} & {9} & {14} & {13} & 4 & {16} & {15} & {10}  &  & 2 & 6 & 8 & 5 & 7 & 4 & 3 & 1 \\
{12} & {15} & {16} & {9} & 6 & {13} & {10} & {11}  &  & 8 & 2 & 7 & 6 & 3 & 5 & 1 & 4 \\
{14} & {10} & {13} & {15} & 8 & {9} & {11} & {12}  &  & 7 & 8 & 3 & 2 & 1 & 6 & 4 & 5 \\
{9} & {14} & {10} & {12} & 7 & {11} & {16} & {13}  &  & 3 & 7 & 1 & 8 & 4 & 2 & 5 & 6 \\
{16} & {11} & {9} & {10} & 5 & {15} & {12} & {14}  &  & 1 & 3 & 4 & 7 & 5 & 8 & 6 & 2 \\
{10} & {13} & {12} & {16} & 3 & {14} & {9} & {15}  &  & 4 & 1 & 5 & 3 & 6 & 7 & 2 & 8 \\
{13} & {12} & {15} & {11} & 1 & {10} & {14} & {16}  &  & 5 & 4 & 6 & 1 & 2 & 3 & 8 & 7 \\
\end{tabular*}
\end{table*}
 \end{example}

\section{Numerical Study}\label{sec5}

The Travel Salesman Problem (TSP) is a well-known optimization challenge. \cite{xiao2021} considered a TSP variant with quantitative-sequential (QS) inputs. Suppose a traveling salesman departs from city $0$ at time $0$ and must visit $m$ cities (denoted as cities $1, \ldots, m$) to complete a sales task. The travel time from city $i$ to city $j$ ($i \neq j$) is given by $s_{i,j}$ days, where $s_{i,j}$ may not necessarily equal $s_{j,i}$. Each city $i$ has a predefined deadline $d_i$ ($i=1,\ldots,m$) by which the sales task must be completed. If the salesman fails to complete the task on time, a penalty of $f$ currency units per day is incurred. Upon completing the sales tasks in all cities, the salesman earns a fixed income of $ma$ currency units (with $a$ units per city) and a commission, where each city contributes $e$ units per day of sales. Additionally, the salesman incurs a daily travel expense of $b$ currency units. The objective is to determine an optimal strategy, that is, the sequence of city visits and the duration of stay in each city to maximize total profit.

Any strategy in this problem can be represented as a QS input, denoted by $w = (x, o)$, where the quantitative component $x = (x_1, \ldots, x_m)$ represents the duration of stay in each city $i$ ($i=1,\ldots,m$), and the sequential component $o = (o_1, \ldots, o_m)$ represents the order of visitation of the cities (ie, a sequence of city indices). Let $o_0 = 0$ denote the starting city, then the completion time of the sales task in city $o_i$ is given by  
$$
C(w,o_i) = \sum_{l=1}^{i} (s_{o_{l-1},o_l} + x_{o_l}).
$$  
The delay time for city $o_i$ is  
$$
T(w, o_i) = \max(0, C(w,o_i) - d_{o_i}).
$$  
Thus, the total profit under strategy $w$ is  
$$
F(w) = m a + e \sum_{i=1}^{m} x_i - b C(w, o_m) - f \sum_{j=1}^{m} T(w, o_j),
$$  
where the detailed derivation can be found in \cite{xiao2021}.

This section considers a travel salesman problem with $m=6$ cities.  
Before carrying out the experiment, we assume that $F(w)$ is unknown. \cite{xiao2021} proposed a Bayesian optimization method, termed QS-learning, to address this optimization problem. Bayesian optimization begins with an initial design and then sequentially selects the next experiment point based on a statistical model (such as the MaGP model) and an acquisition function. This iterative process aims to identify the input $w^*$ that maximizes $F(w)$. The simulation results in \cite{xiao2021} demonstrate that their proposed initial design, combined with the MaGP model, outperforms random designs and other modeling approaches such as PWO and CP in efficiently approximating the optimal solution.

Following the parameter settings in \cite{xiao2021}, we set $m=6$, $a=20$, $e=10$, $b=2$, and $f=15$. The sales deadlines for each city are given by
$
(d_1, \ldots, d_6) = (26, 10, 23, 25, 12, 10).
$
The allowable stay duration in each city is constrained by $x_i \in [1,4]$, and the travel time matrix between cities is as follows:  
\begin{align*}\footnotesize
\big(s_{i,j}\big)_{i=0,\ldots,m, j=1,\ldots,m} =
\begin{pmatrix}
 0.6 & 2.2 & 1.8 & 2.6 & 1.8 & 1.7 \\
 0 & 0.8 & 1.5 & 1.4 & 2.8 & 1.1 \\
 0.7 & 0 & 1.2 & 2.4 & 2.3 & 1.4 \\
 1.5 & 1.2 & 0 & 1.8 & 1.3 & 1.5 \\
 1.2 & 2.4 & 1.7 & 0 & 1.7 & 2.1  \\
 2.7 & 2.4 & 1.3 & 1.7 & 0 & 0.9  \\
 1.1 & 1.3 & 1.4 & 2.3 & 0.9 & 0  \\
\end{pmatrix}.	
\end{align*}

Under the setting of $n = m = 6$, we compare the performance of the following two initial designs in QS-learning: (i) the initial design constructed in Example \ref{ex:m=6} of this paper (see Table \ref{tab:exam1}), denoted by $D_1 = (X_1, O_1)$; and (ii) the initial design constructed using the method proposed in Theorem 2 of \cite{xiao2021}, denoted as $D_2 = (X_2, O_2)$, where both $X_2$ and $O_2$ are the first six rows of $D_0$ in Example \ref{ex:m=6}. Although both initial designs exhibit favorable properties, $D_1$ demonstrates superior performance in terms of distance and orthogonality. Specifically, the quantitative design $X_1$ in $D_1$ outperforms $X_2$ in $D_2$ under the max-min $L_1$ and $L_2$ distance criteria, i.e., $ d_1 (X_1) = 14 > d_1(X_2) = 12, \quad d_2(X_1) = \sqrt{40} > d_2(X_2) = \sqrt{28}.$ 
Both sequence designs $O_1$ and $O_2$ satisfy the pairwise balance property, but the average correlation of $O_1$ is lower than that of $O_2$, i.e.,  
$
r_{ave}(O_1) = 0.2 < r_{ave}(O_2) = 0.36.
$ 
Based on these two initial designs, we applied the QS-learning algorithm to sequentially collect 24 additional experiment points. The modeling in QS-learning adopts the MaGP model, and Bayesian optimization uses the Expected Improvement (EI) acquisition function. For further details, see Sections 3 and 4 of \cite{xiao2021}. The left panel of Figure \ref{fig2} shows the cumulative maximum response values under the two initial designs. We observe that using $D_2$ as the initial design achieves a high profit value of
$
F(w_2^*) = 207.37
$
at the 17th sequential point $w_2^* = (1.08, 1, 1.76, 3.34, 4, 2.75, 1, 6, 2, 5, 3, 4)$, employing $D_1$ exceeds this value earlier (at the 9th sequential point). Moreover, at the 24th sequential point, it identifies the strategy $ w_1^* = (1.35, 2.09, 2.23, 2.75, 4, 2.81, 6, 2, 5, 3, 1, 4), $ that produces a superior profit value of $ F (w_1 *) = 222.84$. This result demonstrates that adopting the initial design proposed in this paper further enhances the convergence speed and final solution quality of the QS active learning method proposed by \cite{xiao2021}.  

For comparison, we also generated a random design with $n = 300$ points, where the sequence design was randomly sampled without replacement from the $6! = 720$ possible city orderings, and the quantitative design was drawn from a uniform distribution. The histogram of the profit values obtained from this design is shown in the right panel of Figure \ref{fig2}. The maximum profit achieved by this random design was $ 204.93$,
which is lower than the results obtained using QS active learning with only 30 samples based on either $D_1$ or $D_2$.

\begin{figure*}[htbp]
\centering
\includegraphics[angle=0, scale=0.5]{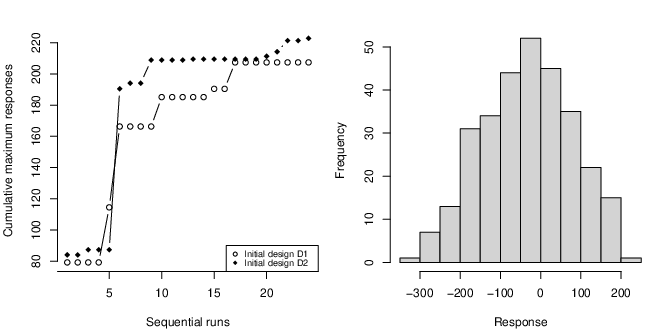}\par
\caption{Left panel: Cumulative maximum response values in QS-learning using the initial designs $D_1$ (triangles) and $D_2$ (circles) for $n = m = 6$.  
Right panel: Histogram of the response values obtained using a random design with $n = 300$.}\label{fig2}
\end{figure*}

\section{Discussion}\label{sec6}
In this paper, we construct a new class of optimal quantitative-sequence (QS) designs with run sizes $n=km$, $k\ge 1$, for $m$ components; see Table~\ref{tab_app} in the appendix for a catalog of some small-size designs obtained.
We prove theoretical results on their marginally coupled, pair-balanced, space-filling and asymptotically orthogonal properties.
We also present some generalizations of the proposed construction to gain more flexibility in design sizes. In a simulation study on the traveling salesman problem, we demonstrate that the constructed design improves active learning performance compared to the design proposed by \cite{xiao2021}.

For future research, we will explore whether similar theoretical results may hold for any other $m$. In addition, for a general $m$, one can first construct a larger design with $m' = p-1 > m$, and then delete some runs and columns to obtain the required sizes.
This method is a generalization of the leave-one-out approach and was adopted by \cite{wang2018optimal}. Their theoretical and simulation results indicate that the resulting designs generally exhibit desirable distance and orthogonality properties. Although the pairwise balance property may not be fully preserved, when $m'$ is large and the number of deleted rows and columns is small, the $t_{i,j}$ values of the obtained design will be approximately equal.
The $X$ part of our design is an LHD, which is desirable for studying quantitative factors in computer experiments. When there are restrictions on the number of levels for quantitative factors, for example, in some physical experiments, we can collapse the $n$ levels of $X$ to obtain a balanced fractional factorial design $X'$. A similar marginally coupled structure will still hold for $(X',O)$, that is, for each level of any factor in $O$ the corresponding design points in $X'$ also form a small balanced design. We will further study other space-filling properties for such a $(X', O)$.

Another interesting topic is investigating how to further improve the quantitative design $X$ for $k \geq 3$ as we have done for the cases of $k=1$ and $k=2$ in this paper. Our proposed designs are very attractive for high-dimensional cases, since they are asymptotically optimal under orthogonal and distance criteria. Yet, for small cases, we can still improve them via a structured level permutation technique.
For example, Theorem \ref{theo:XOkm} and Proposition \ref{prop:other_n=km} only guarantee the asymptotic lower bounds of the $L_1$ and $L_2$ distances of $X$. We will further investigate how to improve the space filling property of $X$ while preserving the marginal coupling structure of $D$. One possible approach is to improve the distance properties of $X$ using random optimization algorithms and level permutation techniques in Steps 4.2–4.3.

The proposed designs aim to provide robust performances that suit various types of model given their desirable structural properties. For example, the proposed designs with $n=m$ can be used as efficient initial designs in active learning under the Gaussian process models by \cite{xiao2021}. For one-shot experiments, the pairwise ordering (PWO) model \citep{van1995design, voelkel2019design} which are used to model order-of-addition factors can be generalized to also include quantitative factors, where the proposed designs with run sizes $n=m(m+1)$ suffice. Given the limited literature on this new type of experiments, we will also work on developing new modeling techniques and studying designs' efficiency under the efficient models.

\section{Acknowledgments}
The authors thank anonymous reviewers for their valuable suggestions.

\section{Supplemental Online Material}

\textbf{Appendix:} This file contains additional technical details and the proofs of all theorems. (.pdf file)

\noindent
\textbf{Codes: }This file contains R codes and relevant files to perform the methods in the article. (zipped file)

\bibliographystyle{asa}
\bibliography{reference}

\newpage

\setcounter{page}{1}

  \begin{center}
    {\LARGE\bf Online Supplementary Materials for ``Design of Experiment with Quantitative-Sequence Factors''}
\end{center}

\begin{appendices}
\section{Design catalogue and algorithms}
In Table~S1 we provide a catalogue of the obtained  designs with small run sizes and the related construction methods (sources).
Algorithm~\ref{alg:searchO} is the threshold accepting algorithm used in Section~\ref{sec3.2} for searching a best sequence design.
Algorithm~\ref{alg:searchOk} is the threshold accepting algorithm used in Section~\ref{sec4} for searching a best sequence design. In each iteration of Algorithm~\ref{alg:searchOk}, we randomly choose one subarray $O_j$, $1 \le j \le k$ and permute its levels (see line~5).

 \begin{table*}[htbp]
\caption{A catalogue of the obtained  designs with small run sizes ($m \le 20, n\le 50$).}
\label{tab_app}
\tabcolsep=0pt
\begin{tabular*}{\textwidth}{@{\extracolsep{\fill}}cccc @{\extracolsep{\fill}}}
\hline
 Number of factors & Number of runs & \multicolumn{2}{c}{Source}  \\
  $m$ & $n$  & $n=m$ & $n>m$ \\ \hline
 4 & 4, 8, 12, 20  & Theorem~\ref{theo:XO_m} & Theorem~\ref{theo:XOkm}\\
 6 & 6, 12, 18, 24, 30, 36, 42  & Corollary~\ref{theo:XO_m_cor}  & Theorem~\ref{theo:XOkm}\\
 8 & 8, 16, 24, 32, 40, 48 & Theorem~\ref{theo:XO_other_m} & Proposition~\ref{prop:other_n=km} \\
 10 & 10, 20, 30, 40, 50  & Theorem~\ref{theo:XO_m} & Theorem~\ref{theo:XOkm}\\
 12 & 12, 24, 36, 48 & Theorem~\ref{theo:XO_m} & Theorem~\ref{theo:XOkm}\\
 16 & 16, 32, 48 & Theorem~\ref{theo:XO_m} & Theorem~\ref{theo:XOkm}\\
 18 & 18, 36 & Corollary~\ref{theo:XO_m_cor}  & Theorem~\ref{theo:XOkm}\\
 20 & 20, 40 & Theorem~\ref{theo:XO_other_m} & Proposition~\ref{prop:other_n=km} \\ \hline
\end{tabular*}
\end{table*}

\begin{algorithm}[htbp]
\caption{Pseudo code for optimizing $O$ with $n=m$.}\label{alg:searchO}
\begin{algorithmic}[1]
\State Initialize $\tau$ and the sequence of thresholds $T_1,\ldots,T_{\tau}$.
\State Input an initial sequence-design $O$ and let $O^{*}=O$. Compute $r_{ave}(O)$.
\For{$i=1$ to $\tau$}
    \State Randomly choose $j, k \in \mathcal{Z}_m^+$, $j\ne k$.
    \State Obtain $O'$ by exchanging the levels $j$ and $k$ in $O$.
    \If {$r_{ave}(O') < (1+T_i) r_{ave}(O)$,}
        \State Update $O=O'$ and $r_{ave}(O)=r_{ave}(O')$.
        \State \textbf{ if } $r_{ave}(O)<r_{Ave}(O^{*}),$ \textbf{ then } update $O^*=O$, \textbf{ end if}
    \EndIf
\EndFor
\end{algorithmic}
\end{algorithm}

\begin{algorithm}[htbp]
\caption{Pseudo code for optimizing $O$ with $n=km$, $k\ge 2$.}\label{alg:searchOk}
\begin{algorithmic}[1]
\State Initialize $\tau$ and the sequence of thresholds $T_1,\ldots,T_{\tau}$.
\State Input the initial sequence-design $O_{init} = (O_1^T, \ldots, O_k^T)$, where each $O_i$ is a $m\times m$ Latin square, $i=1,\ldots,k$.
\State Let $O^{*}=O$. Compute $\psi(O) = 0.5r_{ave}(O)+0.5(1-d_H(O)/(m-1))$.
\For{$i=1$ to $\tau$}
    \State Randomly choose $j \in \{1,\ldots, k\}$.
    \State Randomly choose $l_1, l_2 \in \mathcal{Z}_m^+$, $l_1\ne l_2$.
    \State Obtain $O'$ by exchanging the levels $l_1$ and $l_2$ in $O_j$ of $O$.
    \If {$\psi(O')< (1+T_i) \psi(O)$,}
        \State Update $O=O'$ and $\psi(O)=\psi(O')$.
        \State \textbf{ if } $\psi(O)<\psi(O^{*}),$ \textbf{ then } update $O^*=O$, \textbf{ end if}
    \EndIf
\EndFor
\end{algorithmic}
\end{algorithm}

\section{Proofs}

\begin{proof}[Proof of Theorem~\ref{lem:balance}]
	Parts (i) and (iii) follow directly by the constructions of good lattice point sets and the leave-one-out method \citep{zhou2015space}. Now we show part (ii). When the leave-one-out design $\tilde D_0$ is used as the sequence design, for any $1\leq i \leq m$, the component ``$i$'' appears exactly once in each of the $m$ rows of $\tilde D_0$ by the construction of good lattice point set. Because $p$ is a prime number, for each $k \in \mathcal{Z}_m^+$, there is a unique $x \in \mathcal{Z}_m^+$ such that $k x = i \text{ (mod } p\text{)}$. This implies that for each $1\leq k \leq m$ and $k \neq p-i$, the sub-sequence ``$i, j$'' appears in the $k$th row of $\tilde D_0$ where $j = i+k \text{ (mod } p\text{)}$. As $j = i+k \text{ (mod } p \text{)} $ enumerates elements in $\mathcal{Z}_m^+ \setminus \{i\}$ for $k = 1\ldots m$ and $k \neq p-i$, the subsequence ``$i, j$'' appears exactly once in all rows of $\tilde D_0$ for all $j=1,\ldots, m$ and $j\neq i$, that is, $\tilde D_0$ is pair-balanced.
For any $b\in \mathcal{Z}_p =  \left\{0,\ldots,p-1\right\}$, the leave-one-out design $\tilde D_b$ can be transformed by a level permutation from $\tilde D_0$, and the leave-one-out design $\tilde E_b$ can be transformed by a level permutation from $\tilde D_b$. Therefore, they are all pair-balanced when used as sequence designs, and the conclusion follows.
\end{proof}

\begin{proof}[{Proof of Theorem~\ref{theo:XO_m}}]
	We first show part (i). The Hamming equidistant and pair-balance properties are due to Theorem~\ref{lem:balance}. The sequence design  $O$ is chosen as a special leave-one-out design $\tilde E_{b_1^*}$ with the least average absolute correlation value, thus by the following Theorem~\ref{lem:r} from \cite{wang2018optimal},   $r_{ave}(\tilde E_{b_1^*}) < 5(m+2)/(m-1)^2 = \mathcal{O}(1/m) \rightarrow 0 \text{  as  } m\rightarrow \infty.$

\begin{lemma}\label{lem:r}
  Let $p$ be an odd prime, $D$ be an $p\times (p-1)$ good lattice point design, $D_b=D+b \text{ (mod } p)$, $E_b = W(D_b)$, and $\tilde{E}_b$ be the leave-one-out design obtained from $E_b$ for $b=0,1,\ldots,p-1$. Then $r_{ave}(\tilde{E}_b) < 5(p+1)/(p-2)^2$ for any $b=0,\ldots,p-1$.
\end{lemma}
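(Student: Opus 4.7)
The plan is to evaluate each correlation $r_{uv}(\tilde E_b)$ in closed form, reduce the cross sum between columns $u$ and $v$ to an exponential-like sum of the form $\sum_{j}W(j)W(\alpha j+\beta)$ on $\mathcal{Z}_p$, and then exploit the piecewise-linear structure of the Williams transformation $W$ to extract the cancellation that yields the $O(1/p)$ bound.

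First I would set up notation and reduce to a residue sum. The $i$-th row of $E_b$ equals $(W(iu+b\bmod p))_{u=1}^{p-1}$ for $i=1,\ldots,p$, and the row deleted by the leave-one-out operation is $i=p$, whose entries all equal $W(b)$. Since $W$ is a bijection on $\mathcal{Z}_p$, each column of $\tilde E_b$ is a permutation of $\mathcal{Z}_p\setminus\{W(b)\}$, so all columns share a common mean $\mu_b$ (of order $p/2$) and common variance $\sigma_b^2$ (of order $p^2/12$). Therefore $r_{uv}(\tilde E_b)=N_{uv}(b)/\bigl((p-1)\sigma_b^2\bigr)$, where
\begin{equation*}
N_{uv}(b)=\sum_{i=1}^{p-1}W(iu+b)\,W(iv+b)-(p-1)\mu_b^2.
\end{equation*}
Extending the range to $i\in\mathcal{Z}_p$ adds only the single term $W(b)^2$, and the substitution $j\equiv iu+b\pmod p$ rewrites the extended sum as $\sum_{j\in\mathcal{Z}_p}W(j)W(\alpha j+\beta)$ with $\alpha\equiv vu^{-1}\not\equiv 1\pmod p$ and $\beta\equiv b(1-\alpha)\pmod p$. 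The cross sum therefore depends on $(u,v,b)$ only through the pair $(\alpha,\beta)$.

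The main technical step is to bound $S(\alpha,\beta):=\sum_{j\in\mathcal{Z}_p}W(j)W(\alpha j+\beta)$. Because $W$ is piecewise linear with breakpoint at $(p-1)/2$, the integrand is a piecewise quadratic function of $j$ with at most three breakpoints on $\mathcal{Z}_p$: one from the $W(j)$ factor and two from the affine image $\alpha j+\beta$ (which can wrap around $p$ at most once). On each linear piece I would apply the closed-form formulas for $\sum j$ and $\sum j^2$ over arithmetic progressions. The $O(p^3)$ contributions of the leading quadratic coefficients cancel because on its two halves $W$ has opposite slopes $\pm 2$ over nearly equal ranges of $\mathcal{Z}_p$, leaving a remainder $|S(\alpha,\beta)-S_0|\le Cp^2$ uniformly in $(\alpha,\beta)$, where $S_0$ is the principal term that is absorbed by $(p-1)\mu_b^2$. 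Combined with $\sigma_b^2\ge cp^2$ and a careful tracking of constants, this yields $|r_{uv}(\tilde E_b)|<5(p+1)/(p-2)^2$; averaging over the $(p-1)(p-2)$ ordered pairs gives the claimed bound on $r_{ave}(\tilde E_b)$.

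The main obstacle will be the cancellation in the estimate of $S(\alpha,\beta)$. The Williams transformation is not compatible with the multiplicative structure of $\mathcal{Z}_p$, so the sum does not factor; the positions of the breakpoints of $j\mapsto W(j)$ and $j\mapsto W(\alpha j+\beta)$ interlace in ways that depend sensitively on $(\alpha,\beta)$, and one must verify the expected cancellation of the $O(p^3)$ and subleading $O(p^2)$ terms uniformly across every sub-case, including the regimes in which an affine segment of $W(\alpha j+\beta)$ wraps around $p$. The explicit constant $5$ arises from a tight accounting of how many residues fall in each regime rather than from any slack in the triangle inequality, so the bookkeeping must be done essentially exactly.
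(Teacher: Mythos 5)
First, a point of comparison: the paper does not actually prove this lemma --- it is quoted inside the proof of Theorem~\ref{theo:XO_m} as a known result imported from \cite{wang2018optimal}, so there is no internal argument to match your sketch against. Within your sketch, the reduction of $r_{uv}(\tilde E_b)$ to the single sum $S(\alpha,\beta)=\sum_{j\in\mathcal{Z}_p}W(j)\,W(\alpha j+\beta)$ with $\alpha\equiv vu^{-1}$ and $\beta\equiv b(1-\alpha)\pmod p$ is correct, as is the observation that all columns of $\tilde E_b$ share a common mean and variance, so the lemma would indeed follow from a uniform estimate of $S(\alpha,\beta)$.

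The gap is in your central technical claim that the summand is piecewise quadratic ``with at most three breakpoints'' because the affine image ``can wrap around $p$ at most once.'' This is false for every $\alpha\ge 3$: as $j$ runs through $0,\ldots,p-1$, the residue $\alpha j+\beta \bmod p$ wraps approximately $\alpha-1$ times, and within each non-wrapping run the map $W$ contributes a further breakpoint where the residue crosses $p/2$, so $j\mapsto W(\alpha j+\beta)$ splits into on the order of $2\alpha$ arithmetic-progression pieces --- up to order $p$ of them when $\alpha$ is large. Summing closed-form quadratics over $\Theta(\alpha)$ pieces, each carrying its own leading term of order $p^3/\alpha$ plus boundary corrections, does not by itself yield the required cancellation down to $O(p^2)$ uniformly in $(\alpha,\beta)$; proving that cancellation (which requires controlling how the $\Theta(\alpha)$ piece endpoints distribute relative to the two halves of $W(j)$) is precisely the hard content of the lemma, and your sketch leaves it unaddressed. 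The constant $5$ is likewise asserted rather than derived. A more tractable elementary route, closer in spirit to how such bounds are obtained for Williams-transformed good lattice point sets, is to use the permutation-column identity $r_{uv}=1-6\sum_i\delta_i^2/(n^3-n)$ with $\delta_i=W(iu+b)-W(iv+b)$, combined with the exact description of $|W(x)-W(y)|$ in terms of $x-y$ and $x+y$ modulo $p$, rather than piecewise-polynomial summation over $j$.
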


Next, we show part (ii). The quantitative design $X = \tilde E_{b^*_2}$ is chosen as the leave-one-out design obtained from the Williams transformed design $E_{b^*_2}$. Applying Theorem~3 of \cite{wang2018optimal}, the $L_1$-distance {$d_1(\tilde E_{b_2^*})$ is at least $(p^2-7)/3 + (1/3)\sqrt{(p^2-1)/3}-(p-1)$, which simplifies to ${(m^2-m)}/{3}+\sqrt{m(m+2)/27}-2$} and this is of order $(1-\mathcal{O}(1/m))$ of the bound $d_{1,upper}$ in Lemma~\ref{dupper}. Following the result for the $L_1$-distance case and by the fact that
	$\left\{ \sum_{k=1}^{m} \vert x_{ik}-x_{jk}\vert ^2 \right\} \geq m^{-1}\left\{ \sum_{k=1}^{m} \vert x_{ik}-x_{jk}\vert  \right\}^2 $,
the $L_2$-distance $d_2(\tilde E_{b_2^*})$ is of order $\sqrt{{2}/{3}} (1-\mathcal{O}(1/m))$ of the bound $ d_{2,upper}$ in Lemma~\ref{dupper}.
\end{proof}

\begin{proof}[Proof of Proposition~\ref{prop:twob}]
	We only consider the case $b+b' =(p-1)/2$, the case $b+b'=(3p-1)/2 $ is similar.
	Condition $b+b' =(p-1)/2$ implies that both $b$ and $b'$ are smaller than $p/2$, and therefore $W(b) + W(b') = 2(b+b')=p-1$. Consider the $p\times m$ Williams transformed good lattice point designs $E_b$ and $E_{b'}$.  For any $i=1,\ldots,p-1$, the $i$th row in $E_{b}$ is $W( i h + b \text{ (mod } p)$, and the $(p-i)$th row in $E_{b'}$ is $W( (p-i) h + b'\text{ (mod } p)$, where $h = (1,\ldots,m)$. It follows from the construction of good lattice point design that the following two $2\times p$ matrices
	$$
	\left(
	\begin{array}{cc}
	i h + b \text{ (mod } p) & b \\
	(p-i) h +b' \text{ (mod } p) & b' \\
	\end{array}
	\right),
	$$
        $$
	\left(
	\begin{array}{cc}
	h  & 0 \\
	(p-1) h + (p-1)/2 \text{ (mod } p) & (p-1)/2 \\
	\end{array}
	\right)
	$$
	are the same up to column permutations. By Williams transformation, we have $W(x) + W(x') = p-1$ for any $x\in \mathcal{Z}_p$, where $x'= (p-1) x + (p-1)/2 \text{ (mod } p)$. Therefore,  the sum of the $i$th row in $E_{b}$ and the $(p-i)$th row in $E_{b'}$ is a constant row $p-1$, $i=1,\ldots,p-1$.
	For the leave-one-out designs $\tilde{E}_b$ and $\tilde{E}_{b'}$, their elements are changed from $E_b$ and $E_b'$ respectively by the maps $\mathcal{Z}_p \rightarrow \mathcal{Z}_m^+$:
    $$
	\pi_1(x) = \left\{ \begin{array}{ll}
	x+1, & x<W(b)  \\
	x, & x>W(b)
	\end{array} \right. ,
	$$
        $$
	\pi_2(x) = \left\{ \begin{array}{ll}
	x+1, & x<p-1-W(b)  \\
	x, & x>p-1-W(b)
	\end{array} \right..
    $$
	Between the two numbers $x$ and $p-1-x$, $x\in \mathcal{Z}_p$, there is exactly one that is smaller than $W(b)$ or $p-1-W(b) $, therefore it follows that the sum of the $i$th row in $\tilde E_{b}$ and the $(p-i)$th row in $\tilde E_{b'}$ is a row of constant $p$, which implies that $r_{ave} (\tilde  E_{b}) = r_{ave} (\tilde  E_{b'}) $.
\end{proof}

\begin{proof}[Proof of Lemma~\ref{lem:william_square}]
 For an even $m$, by the definition of the Williams transformation \eqref{wt}, the vector $h = (W^{-1}(0),W^{-1}(1), 
 \ldots, W^{-1}(m-1)) = (0, m-1, 1, m-2, \ldots, m/2-2, m/2+1, m/2-1, m/2)$. View $\mathcal{Z}_m=  \left\{0,\ldots,m-1\right\}$ as the ring of integers modulo $m$, we have the first order difference vector of $h$ is
 \begin{align}
 \label{eq:Wdif}
   & (W^{-1}(1) - W^{-1}(0),W^{-1}(2) - W^{-1}(1), \ldots, W^{-1}(m-1)  -W^{-1}(m-2))~(\textrm{mod } m)  \nonumber \\
   & =  (m-1, 2, m-3, 4, \ldots, m/2-2, 3, m/2, 1),
 \end{align}
 which is a permutation of $\{1,\ldots,m-1\}$.
 For each vector $h + i$ (mod $m$), $i = 1,\ldots,m-1$, the first-order difference vector is also \eqref{eq:Wdif}.
  Therefore, for the $m\times m$ matrix whose $i$th row is $h + (i-1)$ (mod $m$), $i = 1,\ldots,m$, each column is a permutation of $\mathcal{Z}_m$, and each pair of distinct numbers in $\mathcal{Z}_m$ appears exactly once.
  The matrix $O$ is obtained by replacing all the $0$ in the above matrix with $m$, hence it is a pair-balanced Latin square with entries from $\mathcal{Z}_m^+$.
\end{proof}

\begin{proof}[Proof of Theorem~\ref{theo:XO_other_m}]
  Part (i) follows directly by Lemma~\ref{lem:william_square}. For part (ii), when condition (a) holds, by Theorem ~1 of \cite{yin2022distance}, the pairwise $L_1$-distances between rows of $X$ all equal $m(m + 1)/3$, and $X$ is $L_1$-equidistant. When condition (b) or (c) hold, $d_1(X)/d_{1,upper} \ge 1-1/(m+1)$ by Theorem~4 of \cite{yin2022distance}. The conclusions under the $L_2$-distance follow by the same argument as the proof of Theorem~\ref{theo:XO_m} (ii).
\end{proof}

\begin{proof}[Proof of Lemma~\ref{lem:MCD}]
	If $O$ can be expressed as a row juxtaposition of $k$ Latin squares, let $\ell =(1,\ldots,km)^\T = (\ell_1^\T,\ldots,\ell_k^\T)^\T$ with $\ell_i = ((i-1)m+1, (i-1)m+2, \ldots, im)^\T$ for $i=1,\ldots,k$. For each $j=1,\ldots,m$, let the $j$th column of $X$ be $(\tilde{\ell}_{\pi(1)}^\T,\ldots,\tilde{\ell}_{\pi(k)}^\T)^\T$, where $(\pi(1),\ldots,\pi(k))$ is a random permutation of $\mathcal{Z}_{k}^+=\{1,\ldots,k\}$ and $\tilde{\ell}_{i}$ is obtained by randomly permuting the entries of $\ell_{i}$. Then, such $D=(X,O)$ has a marginally coupled structure.
	
 Conversely, if $D=(X,O)$ is a marginally coupled design, WLOG, we consider permuting the rows of $X$ to make its first column in ascending orders which becomes $\ell =(1,\ldots,km)^\T$. Under the same row-permutation in $O$, we obtain $O = \left(O_1^\T,\ldots,O_k^\T\right)^\T$, where each $O_i$ has $m$ rows. By collapsing the $n$ levels in $\mathcal{Z}_n^+$ to the $k$ levels in $\mathcal{Z}_k$ via mapping $x  \mapsto \lfloor (x-1)/m \rfloor$, the first column of $X$  becomes $(0,\ldots,0,1,\ldots,1,\ldots,k-1,\ldots,k-1)^\T$ with each level in $\mathcal{Z}_k$ repeated for $m$ times.
	By the definition of marginally coupled design $D=(X,O)$, for each level in each column of $O$, the corresponding rows in the first column of $X$ must be $(0,1,\ldots,k-1)^\T$ after level collapsing. This shows that each of the $m$ levels in $\mathcal{Z}_m^+$ appears exactly once in each column of $O_i$ where $i=1,\ldots,k$. Thus, $O_i$ is an $m\times m$ Latin square and $O$  can be expressed as a row juxtaposition of $k$ Latin squares $O_i$.
\end{proof}

\begin{proof}[{Proof of Theorem~\ref{theo:XOkm}}]
	(i). By Theorem~\ref{lem:balance}, all the designs $O_i$, $i=1,\ldots,k$, are pair-balanced with $t_{i,j} = 1$ for $1\leq i \neq j \leq m$. Thus, $O = \left(O_1^\T,\ldots,O_k^\T\right)^\T$ is pair-balanced with $t_{i,j} = k$.
	
	(ii). Let design $\tilde E = (\tilde{E}_{0}^\T, \ldots, \tilde{E}_{p-1}^\T )^\T$ be the sequence design when $k=p$, i.e., $k=m+1$. It is clear that $d_{H}(O) \geq d_{H}(\tilde E)$ for any sequence design $O$ with $k = 2, \ldots, p$. Thus, to show $d_{H}(O) \geq m-3$, it suffices to show $d_{H}(\tilde E) = m-3$.
	Let $\tilde e_i$ and $\tilde e_j$ be two distinct rows of $\tilde E$. When $\tilde e_i$ and $\tilde e_j$ are from the same $\tilde{E}_{s}$ for $s=0, \ldots p-1$, we have  $d_H(\tilde e_i, \tilde e_j) = m$, since $\tilde{E}_{s}$ is a Latin square.
	Next, we consider the case when $\tilde e_i$ and $\tilde e_j$ are from different $\tilde{E}_{s}$'s.
	Let the $p^2\times m$ matrix $E = ({E}_{0}^\T, \ldots, {E}_{p-1}^\T )^\T$ be the row juxtaposition of all Williams transformed designs $E_s$ for $s=0,\ldots,p-1$.  We can express $E = W(D)$ where $D=({D}_{0}^\T, \ldots, {D}_{p-1}^\T )^\T$ and $D_b=D_0+b=(x_{ik}+b)$ (mod $p$) for any $b\in \mathcal{Z}_p$. It is easy to see that $D_0$ is a difference matrix and $D$ is a $p^2$-run, $m$-factor and $p$-level orthogonal array of strength two with entries from  the Galois field $GF(p)$ \citep{hedayat1999}. Since each level combination occurs once in any two columns of $D$, the Hamming distance between any two rows of $D$ can only take two possible values: $m$ and $m-1$. Without loss of generality, suppose the two rows $\tilde e_i$ and $\tilde e_j$ in $O$ are the $i$th and $j$th rows in  $\tilde{E}_{a}$ and  $\tilde{E}_{b}$, respectively, where $0\leq a\neq b\leq p-1$. Then, the corresponding $i$th and $j$th rows of $E_a$ and $E_b$ are $e_i$ and $e_j$, respectively.
We have $e_i = W(i h +a$ (mod $p$)$)$ and $e_j = W(j h +b$ (mod $p$)$)$ for $h=(1,\ldots,p-1)$.
	By the construction of good lattice point sets, the two $2\times p$ matrices:
	$$
	\left(
	\begin{array}{cc}
	i h +a \text{ (mod } p\text{)} & a \\
	j h +b \text{ (mod } p\text{)} & b \\
	\end{array}
	\right)
	\text{ and }
	\left(
	\begin{array}{cc}
	h  & 0 \\
	j' h +b' \text{ (mod } p\text{)} & b' \\
	\end{array}
	\right)
	$$
	are the same up to column permutations, where $j'$ and $b'$ are elements in $\mathcal{Z}_p$ which are uniquely determined by $j=i j' \text{ (mod } p\text{)}$ and $b' = b-j' a \text{ (mod } p\text{)}$. Thus, after Williams transformation,
	$$
	\left(
	\begin{array}{cc}
	e_i  & W(a) \\
	e_j  & W(b) \\
	\end{array}
	\right)
	\text{ and }
	\left(
	\begin{array}{cc}
	W(h) & 0 \\
	W(j' h +b' \text{ (mod } p\text{)}) & W(b') \\
	\end{array}
	\right)
	$$
	are the same up to column permutations.
	Since $d_H( e_i, e_j)=m$ or $m-1$, the two rows in
	$$
	\left(
	\begin{array}{cc}
	W(h) & 0 \\
	W(j' h +b' \text{ (mod } p\text{)}) & W(b') \\
	\end{array}
	\right)
	$$
	have at most one column with the same elements. Next, we consider the two rows  $\tilde e_i$ and $\tilde e_j$ whose elements are mapped from $e_i$ and $e_j$ via mapping $\mathcal{Z}_p \rightarrow \mathcal{Z}_m^+$:
	$$\pi_1(x) = \left\{\begin{array}{ll}
	x+1, & x<W(a)  \\
	x, & x>W(a)
	\end{array} \right.
	$$
 $$
	\pi_2(x) = \left\{\begin{array}{ll}
	x+1, & x<W(b)  \\
	x, & x>W(b)
	\end{array} \right. .$$
	It suffices to show $d_H(\tilde e_i, \tilde e_j) \geq d_H(e_i,  e_j)-2$.
	Suppose $x \neq j'x+b' \text{ (mod } p\text{)}$ and $[W(x), W(j'x+b' \text{ (mod } p\text{)} )]^\T$ is a column of $(e_i^\T, e_j^\T)^\T$ with $\vert W(x) - W(j'x+b' \text{ (mod } p\text{)} ) \vert=1$.
	Then, after the above mapping, $\pi_1( W(x) ) =  \pi_2[W(j'x+b' \text{ (mod } p\text{)} )]$ if and only if either of the following cases holds:\\
	Case 1:
	$W(b)-1 < W(x) < W(a) $ and $W(x) = W(j'x+b' \text{ (mod } p\text{)} )-1$;\\
	Case 2:  $W(a) < W(x) < W(b)+1 $ and $W(x) = W(j'x+b' \text{ (mod } p\text{)} )+1$.\\
	For simplicity, we only show the proof of $d_H(\tilde e_i, \tilde e_j) \geq d_H(e_i,  e_j)-2$ for Case~1, and its proof for Case~2 is very similar. There are two situations under Case 1. \\
	Case 1(a): $j' \neq p-1$. There are at most two $x$'s in $\mathcal{Z}_p$ satisfying $W(x) = W(j'x+b' \text{ (mod } p\text{)} )-1$ which correspond to
	$x + (j'x+b' \text{ (mod } p\text{)} ) =p$ for $x>(p-1)/2 $ and $x + (j'x+b' \text{ (mod } p\text{)} ) =p-1$ for $x\leq(p-1)/2 $, respectively. Thus, it is clear that $d_H(\tilde e_i,\tilde e_j) \geq d_H( e_i, e_j) -2 \geq m-3$ under this situation. \\
	Case 1(b): $j' = p-1$. Then, we have $j=p-i$.
	The condition $W(x) = W(j'x+b' \text{ (mod } p\text{)} )-1$ holds only if $x + (j'x+b' \text{ (mod } p\text{)} ) =p $ or $x + (j'x+b' \text{ (mod } p\text{)} ) =p-1 $.
	If $b'\neq 0$ and $b'\neq p-1$, no $x \in \mathcal{Z}_p$ exists such that $\vert W(x) - W(j'x+b' \text{ (mod } p\text{)} ) \vert=1$. Therefore, $d_H(\tilde e_i,\tilde e_j) = d_H( e_i, e_j)\geq m-1$.
	If $b'= 0$, we have $b=p-a$ and $|W(a) - W(b) |=1$. Similarly, if $b'= p-1$, we have $b=p-1-a$ and $\vert W(a) - W(b) \vert=1$ because $a\neq b$. In both scenarios, $W(b)-1< W(a) $ holds if and only if $W(b) = W(a)-1$. Therefore, $W(b)-1 < W(x) < W(a) $ if and only if $x=b$ and $d_H(\tilde e_i,\tilde e_j) = d_H( e_i, e_j)-1\geq m-2$.
	
	(iii). For $1\leq u\neq v \leq m$, it is easy to verify that $r_{uv}(O) = \sum_{i=1}^{k} r_{uv}(O_i) /k$. Therefore, $r_{ave}(O) \leq \sum_{i=1}^{k} r_{ave}(O_i) /k$. By Lemma~\ref{lem:r}, all the leave-one-out Williams transformed designs $O_i$ have $r_{ave}(O_i) < \mathcal{O}(1/m)$ for $i=1,\ldots,k$. Thus, design $O$ has $r_{ave}(O) < \mathcal{O}(1/m) \rightarrow 0$ as $m\rightarrow \infty$.
	
	(iv). For each $X_i$, $i=1,\ldots,k$, the 3-step construction method gives $d_q(X_i) = d_q(\tilde E_{b^*})$ for $q=1,2$. Thus, we have $d_q(X) \leq d_q(\tilde E_{b^*}) $. For any $1\leq i\neq i'\leq k$, let $x^{(i)}_{\ell}$ and $x^{(i)}_{\ell'}$ be the $\ell$th and $\ell'$th rows in design $X_i$ and $X_i'$, respectively. Then, the construction method of $X$ gives
 \begin{align*}
     & d_q( x^{(i)}_{\ell}, x^{(i')}_{\ell'}) \geq d_q( x_{\ell},  x_{\ell'}+m  ) > \Big\{ \sum_{j=1}^{m}(m+1-j)^q \Big\}^{1/q}   > d_q(\tilde E_{b^*}),
 \end{align*}
	where $x_{\ell}$ is the $\ell$th row in $\tilde E_{b^*}$ and $q=1,2$. The conclusion follows by Theorem~\ref{theo:XO_m}~(ii).
		
	(v). For any $j=1,\ldots,m$,  collapsing each level in  $m \ell_j \otimes 1_m +  f_j $, i.e. the $j$th column of $X$, through mapping $x  \mapsto \lfloor (x-1)/m \rfloor$ yields a column $\ell_j \otimes   1_m$. Therefore, for each level in each column of $O$, the corresponding rows in the collapsed column $\ell_j \otimes 1_m$ must be $(0,1,\ldots,k-1)^\T$, which means $D=(X,O)$ must be a marginally coupled design.
\end{proof}

\begin{proof}[Proof of Lemma~\ref{lemma:MCD2}]
	Let $g(\cdot)$ denote the map $g: \mathcal{Z}_n^+ \rightarrow \{0,1\}: x  \mapsto \lfloor (x-1)/m \rfloor$ that collapses the $n$ levels in $\mathcal{Z}_n^+$ to two levels $\{0,1\}$.
	By Lemma~\ref{lem:MCD} and its proof, we can always permute the rows of $D=(X,O)$ such that
	$$
	D = (X,O) = \left(
	\begin{array}{cc}
		X_1 & O_1 \\
		X_2 & O_2
	\end{array}
	\right),
	$$
	where each $O_i$ an $m\times m$ Latin square having the first column $(1,\ldots,m)^\T$, and $g(X)$ has the first column $( 0_m^\T,  1_m^\T)^\T$ after the collapsing map $g$, where $  0_m$ is the vector  of $m$ zeros.
	
	For any $i=1,2$ and $j\in \mathcal{Z}_m^+$, let $x^{(i)}_{j} = (x^{(i)}_{1j},\ldots, x^{(i)}_{mj})^\T$ denote the $j$th column of $X_i$. We have $g(x^{(1)}_{1}) =  0_m$ and $g(x^{(2)}_{1}) =  1_m$.
	We claim that for any $i=1,2$,  $j=2,\ldots,m$, by the collapsing map $g$,  $g(x^{(i)}_{j})$ must become a constant column $ 0_m$ or $ 1_m$. If this is not true, WLOG, suppose that for the second column, $g(x^{(1)}_{2}) = ( 0_a^\T,  1_{m-a}^\T)$, where $0<a<m$. Then because the first columns of $O_1$ and $O_2$ are both $(1,\ldots,m)^\T$,
if the structure is marginally coupled that $g(x^{(2)}_{2}) = ( 1_a^\T,  0_{m-a}^\T)$. Furthermore, the first $a$ rows of $O_1$ (respectively, $O_2$) and the last $m-a$ rows of $O_2$ (respectively, $O_1$) form an $m\times m$ Latin square.  Let $o^{(i)}_{\ell} = (o^{(i)}_{\ell1}, \ldots, o^{(i)}_{\ell m})$ denote the $\ell$th row of the Latin square $O_i$, $1\leq \ell \leq m$, $i=1,2$. Consider the $a^2$ Hamming distances $d_H(o^{(1)}_{\ell},  o^{(2)}_{\ell'}  ) $ where $1\leq \ell,\ell' \leq a$.
	We have
	$$ \sum_{\ell=1}^{a} \sum_{\ell'=1}^{a} \sum_{j=1}^{m} d_H(o^{(1)}_{\ell},  o^{(2)}_{\ell'}  ) = \sum_{j=1}^{m}  \sum_{\ell=1}^{a} \sum_{\ell'=1}^{a} I (o^{(1)}_{\ell},o^{(2)}_{\ell'}) = ma(a-1).$$
	Therefore we have the average of these $a^2$ Hamming distances is $(1-1/a)m$ and $d_H(O) \leq \lfloor (1-1/a)m \rfloor$. Similarly, we have $d_H(O) \leq \lfloor (1-1/(m-a))m \rfloor$. Now we have $m-2 \leq  d_H(O) \leq \min\{ \lfloor (1-1/a)m \rfloor,  \lfloor (1-1/(m-a))m \rfloor\}$. The equality holds if and only if $m$ is even and $a=m-a=m/2$.
	However, this is a contradiction because the number of pairs of runs in $O$  with $d_H(\alpha_i,\alpha_j)=m-2$ is smaller than $m^2/2$, which proves that for any $i=1,2$,  $j=2,\ldots,m$, $g(x^{(i)}_{j})=  0_m$ or $ 1_m$.
	
	By now we have $d_q(X)\leq d_q(X_1) = d_q( X_1 - m \cdot g(X_1) )$. The conclusion then follows from the lemma~\ref{dupper} and the fact that the matrix $(X_1 - m \cdot  g(X_1))$ is an LHD.
\end{proof}

\begin{proof}[Proof of Proposition~\ref{prop:MCD2}]
	(i)  For two runs in the same Latin square $O_1$ or $O_2$, their Hamming distance is $m$. Consider two runs from $O_1$ and $O_2$, respectively, suppose $\tilde e_i$  is the $i$th run of $O_1=\tilde{E}_{b_1^*}$ and $\tilde e_j$ is the $j$th row of $O_2=\tilde{E}_{p-b_1^*}$. The corresponding  $i$th and $j$th rows of $E_{b_1^*}$ and $E_{p-b_1^*}$ are $e_i = W(i h +b_1^*$ (mod $p$)) and $e_j = W(j h -b_1^*$ (mod $p$)) where $h=(1,2,\ldots,p-1)$.  From the proof of Theorem~\ref{theo:XOkm} (ii) we have
	$$
	\left(
	\begin{array}{cc}
	e_i  & W(b_1^*) \\
	e_j  & W(p-b_1^*) \\
	\end{array}
	\right)
	\text{ and }
	\left(
	\begin{array}{cc}
	W(h) & 0 \\
	W(j' h +b'  \text{ (mod } p)) & W(b') \\
	\end{array}
	\right)
	$$
	are the same up to column permutations,  where $j'$ and $b'$ are the elements in $\mathcal{Z}_p$ uniquely determined by $j=i j' \text{ (mod } p)$, $b' = p-(j'+1) b_1^* \text{ (mod } p)$.
	The two rows  $\tilde e_i$ and $\tilde e_j$ have elements changed from $ e_i$ and $ e_j$ respectively  by the maps $\mathcal{Z}_p \rightarrow \mathcal{Z}_m^+$:
	$$\pi_1(x) = \left\{ \begin{array}{ll}
	x+1, & x<W(b_1^*)  \\
	x, & x>W(b_1^*)
	\end{array} \right. ,
	$$
 $$
	\pi_2(x) = \left\{ \begin{array}{ll}
	x+1, & x<W(p-b_1^*)   \\
	x, & x>W(p-b_1^*)
	\end{array} \right..$$
Without loss of generality, suppose $b_1^*<(p-1)/2$, then $W(b_1^*)=2b_1^*$ and $W(p-b_1^*) = 2 b_1^* -1$.
	Following the proof of Theorem~\ref{theo:XOkm} (ii), Case 1: $ W(p-b_1^*)-1 < W(x) < W(b_1^*) $ and $W(x) = W(j' x +b' \text{ (mod } p)) -1$ hold if and only if $x = p-b_1^*$, $j'=-1 $ and $b'=0$, which is equivalent to $x = p-b_1^*$ and $j=p-i $.
	For these pairs $(i, j=p-i)$, $i\in \mathcal{Z}_m^+$, we have that each $(e_i^\T, e_j^\T)^\T$ has exactly one column, being $(0,0)^\T$ and the other columns have different elements. Therefore, $d_H(e_i, e_j) = m-1$, and we have further $d_H(\tilde e_i,\tilde e_j)=m-2$. For the other pairs $(i, j)$, $j\neq p-i$ and $i,j\in \mathcal{Z}_m^+$, $d_H(\tilde e_i,\tilde e_j)=m-1$ or $m$. The conclusion follows.
	
	(ii) From the proof of (i), we have $ d_H(O) = m-2$ and there are exactly $m$ pairs of runs in $O$ with the Hamming distance $m-2$. Thus, the desired conclusion follows from Theorem~\ref{theo:XOkm} (iv) and Lemma~\ref{lemma:MCD2}.
\end{proof}

\end{appendices}

\end{document}